\begin{document}

\bibliographystyle{apsrev4-2}

\title{Approximate Counting in Local Lemma Regimes}

\author{Ryan L. Mann}
\email{mail@ryanmann.org}
\homepage{http://www.ryanmann.org}
\affiliation{Centre for Quantum Computation and Communication Technology, Centre for Quantum Software and Information, School of Computer Science, Faculty of Engineering \& Information Technology, University of Technology Sydney, NSW 2007, Australia}

\author{Gabriel Waite}
\affiliation{Centre for Quantum Computation and Communication Technology, Centre for Quantum Software and Information, School of Computer Science, Faculty of Engineering \& Information Technology, University of Technology Sydney, NSW 2007, Australia}

\begin{abstract}
    We establish efficient approximate counting algorithms for several natural problems in local lemma regimes. In particular, we consider the probability of intersection of events and the dimension of intersection of subspaces. Our approach is based on the cluster expansion method. We obtain fully polynomial-time approximation schemes for both the probability of intersection and the dimension of intersection for commuting projectors. For general projectors, we provide two algorithms: a fully polynomial-time approximation scheme under a global inclusion-exclusion stability condition, and an efficient affine approximation under a spectral gap assumption. As corollaries of our results, we obtain efficient algorithms for approximating the number of satisfying assignments of conjunctive normal form formulae and the dimension of satisfying subspaces of quantum satisfiability formulae.
\end{abstract}

\maketitle

{
\hypersetup{linkcolor=black}
\tableofcontents
}

\newpage

\section{Introduction}
\label{section:Introduction}

Approximate counting is a fundamental problem in computational complexity and algorithms. The goal is to efficiently approximate the number of elements in a combinatorially defined set. This problem arises naturally across mathematics, computer science, and statistical physics in contexts such as computing probabilities in discrete probability spaces, counting satisfying assignments of constraint satisfaction formulae, and evaluating partition functions of statistical mechanical models. Exact counting is typically \#\textsc{P}-hard for these problems. Nevertheless, efficient approximation algorithms exist for many natural counting problems through approaches including Markov-chain Monte Carlo~\cite{jerrum1989approximating}, correlation decay~\cite{weitz2006counting}, and interpolation-type methods~\cite{barvinok2016combinatorics, helmuth2020algorithmic}.

The Lov\'asz local lemma~\cite{erdos1975problems, shearer1985problem} provides a powerful tool for establishing the existence of combinatorial objects. Informally, the lemma states that if a collection of events each occurs with sufficiently small probability and each event depends on only a bounded number of other events, then with positive probability none of the events occur. The quantum Lov\'asz local lemma~\cite{ambainis2012quantum, sattath2016local, he2019quantum} extends this principle to the quantum setting, providing conditions under which the intersection of subspaces has positive dimension. Both the classical and quantum versions guarantee existence under conditions of weak dependencies and appropriate bounds on local parameters. A central question is the development of efficient algorithms in such local lemma regimes.

The algorithmic study of local lemma regimes encompasses three complementary problems: constructing explicit solutions, approximately counting solutions, and approximately sampling solutions. Efficient constructive algorithms exist in the classical setting~\cite{moser2010constructive}, in the commuting quantum setting~\cite{arad2013constructive, schwarz2013information}, and in the general quantum setting under the assumption of a uniform spectral gap~\cite{gilyen2017preparing}. For approximate counting, Barvinok~\cite{barvinok2025computing} established a quasi-polynomial time algorithm in the classical setting. For approximate sampling, polynomial-time algorithms have been developed for structured classical cases~\cite{moitra2019approximate, guo2019uniform, feng2021sampling, jain2022towards, jain2021sampling, he2021perfect, he2022sampling, feng2023towards, he2023deterministic, wang2024sampling}. However, the development of efficient approximate counting and sampling algorithms in the quantum Lov\'asz local lemma regime remains open.

In this paper, we establish efficient approximate counting algorithms for several natural problems in local lemma regimes. Our main results are as follows. First, we provide a fully polynomial-time approximation scheme for the probability of intersection of events. Second, we establish a fully polynomial-time approximation scheme for the dimension of intersection for commuting projectors. Third, we provide two algorithms for the dimension of intersection for general projectors: a fully polynomial-time approximation scheme under a global inclusion-exclusion stability condition, and an affine approximation under a spectral gap assumption. Our results and their relationship to prior work on existence, constructive, counting, and sampling algorithms in local lemma regimes are summarised in Table~\ref{table:summaryresultslocallemmaregimes}. As corollaries of our results, we obtain efficient algorithms for approximating the number of satisfying assignments of conjunctive normal form (CNF) formulae and the dimension of satisfying subspaces of quantum satisfiability formulae.

\begin{table}[ht]
    \centering
    \setcellgapes{3pt}
    \makegapedcells
    \begin{tabularx}{\textwidth}{XXXXX}
        \hline
        \makecell{\textbf{Problem}} & \makecell{\textbf{Existence}} & \makecell{\textbf{Constructive}} & \makecell{\textbf{Counting}} & \makecell{\textbf{Sampling}} \\
        \hline
        \makecell{Probability of \\ Intersection} & \makecell{Yes~\cite{erdos1975problems, shearer1985problem}} & \makecell{Yes~\cite{moser2010constructive}} & \makecell{Yes~\cite{barvinok2025computing} \\ Theorem~\ref{theorem:ApproximationAlgorithmProbabilityIntersection}} & \makecell{Partial~\cite{moitra2019approximate, guo2019uniform, feng2021sampling, jain2022towards, jain2021sampling, he2021perfect, he2022sampling, feng2023towards, he2023deterministic, wang2024sampling}} \\
        \makecell{Commuting \\ Dimension of \\ Intersection} & \makecell{Yes~\cite{ambainis2012quantum, sattath2016local, he2019quantum}} & \makecell{Yes~\cite{arad2013constructive, schwarz2013information}} & \makecell{Yes \\ Theorem~\ref{theorem:ApproximationAlgorithmDimensionIntersectionCommuting}} & \makecell{Open} \\
        \makecell{General \\ Dimension of \\ Intersection} & \makecell{Yes~\cite{ambainis2012quantum, sattath2016local, he2019quantum}} & \makecell{Partial~\cite{gilyen2017preparing}} & \makecell{Partial \\ Theorems~\ref{theorem:ApproximationAlgorithmDimensionIntersectionGeneral} and~\ref{theorem:DetectabilityApproximationAlgorithmDimensionIntersectionGeneral}} & \makecell{Open} \\
        \hline
    \end{tabularx}
    \caption{Summary of results in local lemma regimes.}
    \label{table:summaryresultslocallemmaregimes}
\end{table}

Our algorithms are based on the cluster expansion from mathematical physics~\cite{friedli2017statistical}. We formulate each counting problem as an abstract polymer model in the formalism of Koteck\'y and Preiss~\cite{kotecky1986cluster}. When the polymer weights satisfy a suitable decay condition, the cluster expansion provides a convergent power series representation for the logarithm of the polymer model partition function. Under additional assumptions on the polymer structure, this yields efficient approximation algorithms via truncation~\cite{helmuth2020algorithmic, borgs2020efficient}. This method has been used to design efficient approximate counting algorithms for classical partition functions, such as the hardcore model~\cite{helmuth2020algorithmic, jenssen2020algorithms, chen2019fast, cannon2020counting, jenssen2020independent, jenssen2023approximately, galvin2024zeroes, collares2025counting} and the Potts model~\cite{helmuth2020algorithmic, chen2019fast, borgs2020efficient, helmuth2023finite, carlson2024algorithms}, for quantum partition functions~\cite{mann2021efficient, helmuth2023efficient, mann2024algorithmic}, and for volumes of graph-theoretic polytopes~\cite{guo2024deterministic}. We refer the reader to Refs.~\cite{patel2022approximate, jenssen2024cluster} for surveys on this topic.

This paper is structured as follows. In Section~\ref{section:Preliminaries}, we introduce the necessary preliminaries. Then, in Section~\ref{section:ApproximationAlgorithms}, we establish an algorithm for approximating abstract polymer model partition functions. In Section~\ref{section:Applications}, we apply this algorithm to establish our main counting results in local lemma regimes. Finally, we conclude in Section~\ref{section:ConclusionAndOutlook} with some remarks and open problems.

\section{Preliminaries}
\label{section:Preliminaries}

\subsection{Graph Theory}
\label{section:GraphTheory}

Let $G$ be a graph with vertex set $V(G)$ and edge set $E(G)$. We denote the \emph{order} of $G$ by $\abs{G}=\abs{V(G)}$. The \emph{maximum degree} $\Delta(G)$ of $G$ is the maximum degree over all vertices of $G$. For a subset $S \subseteq V(G)$, the \emph{induced subgraph} $G[S]$ is the subgraph of $G$ whose vertex set is $S$ and whose edge set consists of all edges in $G$ which have both endpoints in $S$. An \emph{independent set} of $G$ is a subset $S$ of $V(G)$ with no edges between them. A \emph{proper colouring} of $G$ is a partition of $V(G)$ into independent sets. The \emph{chromatic number} $\chi(G)$ of $G$ is the minimum number of parts over all proper colourings of $G$. The chromatic number satisfies the bound $\chi(G)\leq\Delta(G)+1$. This can be improved to $\chi(G)\leq\Delta(G)$ if $G$ is a connected graph that is neither a complete graph nor an odd cycle~\cite{brooks1941colouring}. We denote the \emph{complete graph} on $n$ vertices by $K_n$. For two graphs $G$ and $H$, the \emph{strong product} $G \boxtimes H$ is the graph whose vertex set is $V(G) \times V(H)$ and whose edge set consists of all pairs of distinct vertices $\{(u,u'),(v,v')\}$ such that either $u=v$ and $\{u',v'\} \in E(H)$, $u'=v'$ and $\{u,v\} \in E(G)$, or $\{u,v\} \in E(G)$ and $\{u',v'\} \in E(H)$.

\subsection{Abstract Polymer Models}
\label{section:AbstractPolymerModels}

An \emph{abstract polymer model} is a triple $(\mathcal{C},w,\sim)$, where $\mathcal{C}$ is a countable set whose elements are called \emph{polymers}, $w:\mathcal{C}\to\mathbb{C}$ is a function that assigns to each polymer a complex number $w_\gamma$ called the \emph{weight} of the polymer, and $\sim$ is a \emph{symmetric compatibility relation} such that each polymer is incompatible with itself. A set of polymers is called \emph{admissible} if the polymers in the set are all pairwise compatible. Note that the empty set is admissible. We denote by $\mathcal{G}$ the collection of all admissible sets of polymers from $\mathcal{C}$. The abstract polymer model partition function is defined by
\begin{equation}
    Z(\mathcal{C},w) \coloneqq \sum_{\Gamma\in\mathcal{G}}\prod_{\gamma\in\Gamma}w_\gamma. \notag
\end{equation}

The abstract polymer model formulation offers a convenient way of applying cluster expansion techniques to counting problems.

\subsection{Abstract Cluster Expansion}
\label{section:AbstractClusterExpansion}

We now define the \emph{abstract cluster expansion}~\cite{kotecky1986cluster, friedli2017statistical}. Let $\Gamma$ be a non-empty ordered tuple of polymers. The \emph{incompatibility graph} $H_\Gamma$ is the graph whose vertex set is $\Gamma$ and has an edge between vertices $\gamma$ and $\gamma'$ if and only if they are incompatible. We say that $\Gamma$ is a \emph{cluster} if its incompatibility graph $H_\Gamma$ is connected. We denote by $\mathcal{G}_C$ the set of all clusters of polymers from $\mathcal{C}$. The abstract cluster expansion is a formal power series for $\log(Z(\mathcal{C},w))$ in the variables $w_\gamma$, defined by
\begin{equation}
    \log(Z(\mathcal{C},w)) \coloneqq \sum_{\Gamma\in\mathcal{G}_C}\varphi(H_\Gamma)\prod_{\gamma\in\Gamma}w_\gamma, \notag
\end{equation}
where $\varphi(\;\cdot\;)$ denotes the \emph{Ursell function}. Specifically, for a graph $H$, $\varphi(H)$ is defined by
\begin{equation}
    \varphi(H) \coloneqq \frac{1}{\abs{H}!}\sum_{\substack{S \subseteq E(H) \\ \text{spanning} \\ \text{connected}}}(-1)^\abs{S}, \notag
\end{equation}
where the sum is over all spanning connected edge sets of $H$.

In the abstract setting, we consider a function $\abs{\;\cdot\;}:\mathcal{C}\to\mathbb{Z}^+$ that assigns to each polymer a positive integer $\abs{\gamma}$ called the \emph{size} of the polymer. In the sequel, we shall consider polymers that are connected induced subgraphs of a graph. For such polymers, the size of the polymer is taken to be its order, i.e., $\abs{\gamma}=\abs{V(\gamma)}$. For algorithmic applications, it is convenient to consider the truncated cluster expansion $T_m(Z(\mathcal{C},w))$ to order $m$, defined by
\begin{equation}
    T_m(Z(\mathcal{C},w)) \coloneqq \sum_{\substack{\Gamma\in\mathcal{G}_C \\ \abs{\Gamma} \leq m}}\varphi(H_\Gamma)\prod_{\gamma\in\Gamma}w_\gamma, \notag
\end{equation}
where $\abs{\Gamma}\coloneqq\sum_{\gamma\in\Gamma}\abs{\gamma}$.

\subsection{Approximation Schemes}
\label{section:ApproximationSchemes}

Let $\epsilon>0$ be a real number. A \emph{multiplicative $\epsilon$-approximation} to a number $\alpha$ is a number $\hat{\alpha}$ such that $\abs{\alpha-\hat{\alpha}}\leq\epsilon\abs{\alpha}$. A \emph{fully polynomial-time approximation scheme} for a sequence of numbers $(\alpha_n)_{n\in\mathbb{N}}$ is a deterministic algorithm that, for any $n$ and $\epsilon>0$, produces a multiplicative $\epsilon$-approximation to $\alpha_n$ in time polynomial in $n$ and $1/\epsilon$. A \emph{fully polynomial-time randomised approximation scheme} for a sequence of numbers $(\alpha_n)_{n\in\mathbb{N}}$ is a randomised algorithm that, for any $n$ and $\epsilon>0$, produces a multiplicative $\epsilon$-approximation to $\alpha_n$ with probability at least $2/3$ in time polynomial in $n$ and $1/\epsilon$.

\subsection{Complexity Theory}
\label{section:ComplexityTheory}

We shall refer to the following complexity classes: \textsc{P} (polynomial time), \textsc{RP} (randomised polynomial time), \textsc{NP} (non-deterministic polynomial time), and \#\textsc{P}. For a formal definition of these complexity classes, we refer the reader to Ref.~\cite{arora2009computational}.

\section{Approximation Algorithms}
\label{section:ApproximationAlgorithms}

In this section, we establish an efficient algorithm for approximating abstract polymer model partition functions. We consider abstract polymer models in which the polymers are connected induced subgraphs of bounded-degree graphs and compatibility is defined by disconnectedness. Under a suitable decay condition on the polymer weights, the logarithm of the partition function can be controlled by a convergent cluster expansion.

The algorithm approximates the partition function by evaluating the truncated cluster expansion to sufficiently high order. This algorithm is essentially due to Helmuth, Perkins, and Regts~\cite{helmuth2020algorithmic} and Borgs et al.~\cite{borgs2020efficient}, with slight modifications in the analysis. The result is summarised in the following theorem.

\begin{theorem}
    \label{theorem:ApproximationAlgorithmAbstractPolymerModelPartitionFunction}
    Fix $\delta>0$. Let $G$ be a graph of maximum degree at most $\Delta$. Further let $(\mathcal{C},w,\sim)$ be an abstract polymer model such that the polymers are connected induced subgraphs of $G$ and that two polymers $\gamma$ and $\gamma'$ are compatible if and only if $V(\gamma) \cap V(\gamma') = \varnothing$ and $G[V(\gamma) \cup V(\gamma')] = \gamma\cup\gamma'$. Suppose that, for all polymers $\gamma\in\mathcal{C}$, the weight $w_\gamma$ can be computed in time $\exp(O(\abs{\gamma}))$ and satisfies
    \begin{equation}
        \abs{w_\gamma} \leq \left(\frac{1}{e^{1+\delta}(2\Delta+1)}\right)^\abs{\gamma}. \notag
    \end{equation}
    Then the cluster expansion for $\log(Z(\mathcal{C},w))$ converges absolutely, $Z(\mathcal{C},w)\neq0$, and there is a fully polynomial-time approximation scheme for $Z(\mathcal{C},w)$.
\end{theorem}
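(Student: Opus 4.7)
The plan is a standard cluster-expansion argument with Patel--Regts style enumeration. I verify a Koteck\'y--Preiss decay condition, deduce absolute convergence of the cluster expansion and therefore $Z(\mathcal{C},w) \neq 0$, bound the truncation error of $T_m$ exponentially in $m$, and finally enumerate all small clusters in polynomial time.

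First, I would check the Koteck\'y--Preiss condition: there exist constants $\alpha,\beta > 0$ depending on $\Delta$ and $\delta$ such that, for every $\gamma^* \in \mathcal{C}$,
\begin{equation}
    \sum_{\gamma \not\sim \gamma^*} \abs{w_\gamma}\, e^{(\alpha+\beta)\abs{\gamma}} \leq \alpha\abs{\gamma^*}. \notag
\end{equation}
Incompatibility with $\gamma^*$ forces $\gamma$ to contain some vertex of the closed neighbourhood $N_G[V(\gamma^*)]$, a set of at most $(\Delta+1)\abs{\gamma^*}$ vertices. Rooting $\gamma$ at such a vertex, using a standard bound of the form $(e\Delta)^{k-1}$ for the number of connected induced subgraphs of size $k$ containing a fixed vertex in a graph of maximum degree $\Delta$, and invoking the weight hypothesis, reduces the condition to a geometric-series estimate whose convergence is secured by admissible choices of $\alpha,\beta > 0$ using the slack $\delta$. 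The Koteck\'y--Preiss theorem then yields absolute convergence of the cluster expansion, so $\log(Z(\mathcal{C},w))$ is a finite complex number and in particular $Z(\mathcal{C},w) \neq 0$.

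Next, I would use the companion tail estimate to control the truncation error. The rooted bound $\sum_{\Gamma \ni \gamma^*,\,\abs{\Gamma}>m}\abs{\varphi(H_\Gamma)}\prod_{\gamma \in \Gamma}\abs{w_\gamma} \leq \alpha\abs{\gamma^*}\, e^{-\beta m}$ follows from the Koteck\'y--Preiss condition together with $e^{\beta\abs{\Gamma}} \geq e^{\beta m}$ on the truncated tail. A standard cluster-expansion accounting argument, which distributes each cluster across the vertices of its support and absorbs the polymer-enumeration factor using the weight bound, then yields
\begin{equation}
    \abs{\log(Z(\mathcal{C},w)) - T_m(Z(\mathcal{C},w))} \leq C\abs{V(G)}\, e^{-\beta m} \notag
\end{equation}
for some constant $C = C(\Delta,\delta)$. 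Setting $m = \Theta(\log(\abs{V(G)}/\epsilon))$ drives this logarithmic error below $\epsilon$, so exponentiating delivers a multiplicative $O(\epsilon)$-approximation to $Z(\mathcal{C},w)$.

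Finally, I would give a polynomial-time algorithm for evaluating $T_m(Z(\mathcal{C},w))$. Using Patel--Regts enumeration of connected subgraphs in a bounded-degree graph, all connected induced subgraphs of $G$ of order at most $m$ are listed in time $\abs{V(G)}(e\Delta)^{O(m)}$, and from these all polymer clusters of total size at most $m$ are enumerated at a further overhead of $(e\Delta)^{O(m)}$. Each summand is then evaluated in $\exp(O(m))$ time using the hypothesised routine for the weights and brute-force evaluation of the Ursell function on a graph of order at most $m$. Since $\Delta$ is fixed and $m = O(\log(\abs{V(G)}/\epsilon))$, the overall runtime is polynomial in $\abs{V(G)}$ and $1/\epsilon$. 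I expect the main subtlety to lie in aligning the constants in the Koteck\'y--Preiss step precisely with the $e^{1+\delta}(2\Delta+1)$ factor in the weight hypothesis---presumably the ``slight modifications in the analysis'' noted in the statement---whereas the truncation and enumeration steps then follow the now-standard algorithmic template of Helmuth--Perkins--Regts and Borgs et al.
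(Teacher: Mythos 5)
Your overall architecture matches the paper's proof exactly: verify a Koteck\'y--Preiss condition to get absolute convergence and $Z(\mathcal{C},w)\neq0$, bound the truncation error of $T_m$ by $\abs{G}e^{-\Omega(m)}$, take $m=\Theta(\log(\abs{G}/\epsilon))$, and evaluate $T_m$ by Patel--Regts enumeration of connected subgraphs plus brute-force Ursell-function and weight computations. The truncation and enumeration steps are fine as sketched.

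The gap is in the one step you flag as ``the main subtlety'' but then wave through: the crude counting bound $(e\Delta)^{k-1}$ together with a geometric-series estimate does \emph{not} verify the Koteck\'y--Preiss condition for the constant $\tfrac{1}{e^{1+\delta}(2\Delta+1)}$ when $\delta$ is small, and the theorem must hold for every fixed $\delta>0$. Concretely, in the limit $\alpha,\beta,\delta\to0$ your rooted sum $\sum_{\gamma\not\sim\gamma^*}\abs{w_\gamma}e^{(\alpha+\beta)\abs{\gamma}}$ is bounded below by roughly $(\Delta+1)\abs{\gamma^*}\sum_{k\geq1}(e\Delta)^{k-1}\bigl(e(2\Delta+1)\bigr)^{-k}=\tfrac{1}{e}\abs{\gamma^*}$, which already forces $\alpha\geq1/e$; feeding $\alpha\geq1/e$ back into the exponential weight $e^{\alpha k}$ inflates the geometric ratio to $\tfrac{\Delta e^{1/e}}{2\Delta+1}$ and the resulting sum exceeds $\alpha\abs{\gamma^*}$ for every $\Delta\geq2$ (e.g.\ for $\Delta=2$ one gets about $0.76\abs{\gamma^*}$ versus the required $0.37\abs{\gamma^*}$), and iterating only diverges. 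The paper closes this by using the sharper enumeration bound $\tfrac{(k\Delta)^{k-1}}{k!}$ for connected induced subgraphs of order $k$ through a fixed vertex (Borgs et al.), choosing $a(\gamma)=\tfrac{\Delta+1}{2\Delta+1}\abs{\gamma}$ and $d(\gamma)=\delta\abs{\gamma}$, and summing the series exactly via the tree function: with $x=\tfrac{\Delta}{2\Delta+1}$ one has $T(xe^{-x})=x$, which makes the rooted sum come out to precisely $\tfrac{\Delta+1}{2\Delta+1}\abs{\gamma^*}=a(\gamma^*)$ with no room to spare. So your proof would go through only for a strictly worse weight threshold (something like $e^{-(2+\delta)}\Delta^{-1}$ up to constants); to prove the theorem as stated you need to replace the geometric-series step with the tree-function computation.
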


In Section~\ref{section:Applications} we shall apply Theorem~\ref{theorem:ApproximationAlgorithmAbstractPolymerModelPartitionFunction} to establish efficient approximation algorithms for several counting problems in local lemma regimes.

The proof of Theorem~\ref{theorem:ApproximationAlgorithmAbstractPolymerModelPartitionFunction} requires several lemmas. We first show that provided the polymer weights satisfy a suitable decay condition, then the cluster expansion converges absolutely and the truncated cluster expansion provides a good approximation to $\log(Z(\mathcal{C},w))$. This is formalised by the following lemma which utilises the Koteck\'y-Preiss convergence criterion~\cite{kotecky1986cluster}.

\begin{lemma}
    \label{lemma:ConvergenceClusterExpansion}
    Fix $\delta>0$. Let $G$ be a graph of maximum degree at most $\Delta$. Further let $(\mathcal{C},w,\sim)$ be an abstract polymer model such that the polymers are connected induced subgraphs of $G$ and that two polymers $\gamma$ and $\gamma'$ are compatible if and only if $V(\gamma) \cap V(\gamma') = \varnothing$ and $G[V(\gamma) \cup V(\gamma')] = \gamma\cup\gamma'$. Suppose that, for all polymers $\gamma\in\mathcal{C}$, the weight $w_\gamma$ satisfies
    \begin{equation}
        \abs{w_\gamma} \leq \left(\frac{1}{e^{1+\delta}(2\Delta+1)}\right)^\abs{\gamma}. \notag
    \end{equation}
    Then the cluster expansion for $\log(Z(\mathcal{C},w))$ converges absolutely, $Z(\mathcal{C},w)\neq0$, and for $m\in\mathbb{Z}^+$,
    \begin{equation}
        \abs{T_m(Z(\mathcal{C},w))-\log(Z(\mathcal{C},w))} \leq \frac{\Delta+1}{2\Delta+1}\abs{G}e^{-\delta m}. \notag
    \end{equation}
\end{lemma}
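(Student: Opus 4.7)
The plan is to apply the Koteck\'y--Preiss convergence criterion to the abstract polymer model $(\mathcal{C}, w, \sim)$ with the choice $a(\gamma) = \abs{\gamma}$ and $b(\gamma) = \delta\abs{\gamma}$, for which the criterion reduces to showing that, for every polymer $\gamma^* \in \mathcal{C}$,
\begin{equation}
    \sum_{\gamma \not\sim \gamma^*} \abs{w_\gamma} e^{(1+\delta)\abs{\gamma}} \leq \abs{\gamma^*}. \notag
\end{equation}
Once this holds, the standard Koteck\'y--Preiss theory yields absolute convergence of the cluster expansion for $\log(Z(\mathcal{C},w))$, the non-vanishing of $Z(\mathcal{C},w)$, and a per-polymer tail estimate of the form $\sum_{\Gamma \ni \gamma^*} \abs{\varphi(H_\Gamma)} \prod_{\gamma \in \Gamma} \abs{w_\gamma} e^{\delta\abs{\gamma}} \leq \abs{\gamma^*}$.

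To verify the criterion, I would use the fact that any polymer $\gamma$ incompatible with $\gamma^*$ must contain a vertex in the closed neighborhood $N[V(\gamma^*)]$ of $V(\gamma^*)$ in $G$, whose cardinality is at most $(\Delta+1)\abs{\gamma^*}$. It therefore suffices to show, for every vertex $v \in V(G)$,
\begin{equation}
    \sum_{\gamma : v \in V(\gamma)} \abs{w_\gamma} e^{(1+\delta)\abs{\gamma}} \leq \frac{1}{\Delta+1}. \notag
\end{equation}
The weight hypothesis bounds each summand by $(2\Delta+1)^{-\abs{\gamma}}$, and a standard enumeration of connected induced subgraphs of a bounded-degree graph through a fixed vertex (via a rooted-tree exploration in the universal cover), combined with a geometric-series estimate, then yields the per-vertex bound.

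For the truncation estimate, I would start from the identity $\log(Z(\mathcal{C},w)) - T_m(Z(\mathcal{C},w)) = \sum_{\abs{\Gamma} > m} \varphi(H_\Gamma) \prod_{\gamma \in \Gamma} w_\gamma$, take absolute values, and use $1 \leq e^{-\delta m} e^{\delta\abs{\Gamma}}$ on the tail $\abs{\Gamma} > m$ to extract the factor $e^{-\delta m}$. This reduces the problem to bounding $\sum_{\Gamma} \abs{\varphi(H_\Gamma)} \prod \abs{w_\gamma} e^{\delta\abs{\Gamma}}$ by $\frac{\Delta+1}{2\Delta+1}\abs{G}$. I would achieve this by assigning to each cluster a canonical vertex in its support (introducing the factor $\abs{G}$ through the outer sum over $V(G)$) and then bounding each per-vertex contribution via the per-polymer Koteck\'y--Preiss estimate from the first paragraph combined with the per-vertex bound from the second.

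The hard part will be carrying out the bookkeeping so as to produce precisely the constant $\frac{\Delta+1}{2\Delta+1}$ with only the linear factor $\abs{G}$. This demands both a pinning scheme for the cluster sum that avoids spurious overcounting (e.g., the polymer containing the smallest-indexed vertex of the support) and a sufficiently tight bound on the number of connected induced subgraphs of $G$ through a fixed vertex, so that the various geometric series combine to the target constant rather than a weaker one.
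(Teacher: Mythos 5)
Your overall architecture (Koteck\'y--Preiss with $d(\gamma)=\delta\abs{\gamma}$, a per-vertex reduction using $\abs{N[V(\gamma^*)]}\leq(\Delta+1)\abs{\gamma^*}$, and vertex-pinning of the cluster tail to get the factor $\abs{G}e^{-\delta m}$) matches the paper's, but your choice $a(\gamma)=\abs{\gamma}$ breaks the verification step. With that choice the summand is $\abs{w_\gamma}e^{(1+\delta)\abs{\gamma}}\leq(2\Delta+1)^{-\abs{\gamma}}$, and the number of connected induced subgraphs of order $m$ through a fixed vertex in a graph of maximum degree $\Delta$ is, in the worst case (e.g.\ locally tree-like graphs), of order $(e\Delta)^{m-1}$ up to polynomial factors --- the sharp upper bound used in the paper is $\frac{(m\Delta)^{m-1}}{m!}\sim\frac{(e\Delta)^{m-1}}{m^{3/2}}$, and it is essentially attained. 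Since $e\Delta>2\Delta+1$ for all $\Delta\geq2$, the per-vertex series $\sum_m N_m(v)(2\Delta+1)^{-m}$ has (eventually) growing terms; it is not bounded by $\frac{1}{\Delta+1}$, and for $\Delta\geq4$ it can grow without bound as $\abs{G}$ grows. So the criterion cannot be verified with $a(\gamma)=\abs{\gamma}$ under the stated weight hypothesis, and no amount of careful bookkeeping in the last step will recover this.

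The fix is to take a strictly smaller multiplier in $a$: the paper uses $a(\gamma)=\frac{\Delta+1}{2\Delta+1}\abs{\gamma}$ (together with zero-weight single-vertex dummy polymers $\gamma_u$ to do the per-vertex pinning cleanly). Then the summand becomes $\bigl(e^{-\Delta/(2\Delta+1)}(2\Delta+1)^{-1}\bigr)^{\abs{\gamma}}$, the leftover factor $e^{-\Delta\abs{\gamma}/(2\Delta+1)}$ absorbs the subgraph count, and the series $\sum_m\frac{(m\Delta)^{m-1}}{m!}z^m$ with $z=\frac{\Delta}{2\Delta+1}e^{-\Delta/(2\Delta+1)}$ lies inside the convergence radius of the tree function $T(z)=ze^{T(z)}$ and evaluates exactly to $\frac{\Delta}{2\Delta+1}$. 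After the union-bound factor $\Delta+1$ this gives precisely $\frac{\Delta+1}{2\Delta+1}=a(\gamma_u)$, which is also where the constant $\frac{\Delta+1}{2\Delta+1}$ in the truncation bound comes from. You should replace your ``geometric-series estimate'' with this tree-function computation; a crude geometric bound on the subgraph count will not close the argument at the stated threshold.
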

\begin{proof}
    We introduce a polymer $\gamma_u$ to every vertex $u$ in $G$ consisting of only that vertex. The weight of $\gamma_u$ is defined to be $w_{\gamma_u}=0$. We define $\gamma_u$ to be incompatible with every polymer $\gamma$ such that either $V(\gamma_u) \cap V(\gamma) \neq \varnothing$ or $G[V(\gamma_u) \cup V(\gamma)] \neq \gamma_u\cup\gamma$. To apply the Koteck\'y-Preiss convergence criterion, we bound the following exponentially weighted sum of polymer weights incompatible with $\gamma_u$.
    \begin{equation}
        \sum_{\gamma\nsim\gamma_u}\abs{w_\gamma}e^{\left(\frac{\Delta+1}{2\Delta+1}+\delta\right)\abs{\gamma}} \leq \sum_{\gamma\nsim\gamma_u}\left(\frac{1}{e^{1+\delta}(2\Delta+1)}e^{\frac{\Delta+1}{2\Delta+1}+\delta}\right)^\abs{\gamma} = \sum_{\gamma\nsim\gamma_u}\left(\frac{1}{e^{\frac{\Delta}{2\Delta+1}}(2\Delta+1)}\right)^\abs{\gamma}. \notag
    \end{equation}
    For a vertex $v$, the number of polymers with $\abs{\gamma}=m$ that contain $v$ is at most $\frac{(m\Delta)^{m-1}}{m!}$~\cite[Lemma 2.1]{borgs2013left}. Hence, by a union bound, the number of polymers with $\abs{\gamma}=m$ that are incompatible with $\gamma_u$ is at most $(\Delta+1)\frac{(m\Delta)^{m-1}}{m!}$. Thus, we may write
    \begin{align}
        \sum_{\gamma\nsim\gamma_u}\abs{w_\gamma}e^{\left(\frac{\Delta+1}{2\Delta+1}+\delta\right)\abs{\gamma}} &\leq (\Delta+1)\sum_{m=1}^\infty\frac{(m\Delta)^{m-1}}{m!}\left(\frac{1}{e^{\frac{\Delta}{2\Delta+1}}(2\Delta+1)}\right)^m \notag \\
        &= \frac{(\Delta+1)}{\Delta}\sum_{m=1}^\infty\frac{m^{m-1}}{m!}\left(\frac{\Delta}{e^{\frac{\Delta}{2\Delta+1}}(2\Delta+1)}\right)^m \notag \\
        &= \frac{(\Delta+1)}{\Delta}T\left(\frac{\Delta}{e^{\frac{\Delta}{2\Delta+1}}(2\Delta+1)}\right), \notag
    \end{align}
    where $T(\;\cdot\;)$ denotes the \emph{tree function}. Specifically, $T(z)$ is the unique formal power series solution to the functional equation $T(z)=ze^{T(z)}$. We have
    \begin{equation}
        T\left(\frac{\Delta}{e^{\frac{\Delta}{2\Delta+1}}(2\Delta+1)}\right) = \frac{\Delta}{2\Delta+1}. \notag
    \end{equation}
    Therefore,
    \begin{equation}
        \sum_{\gamma\nsim\gamma_u}\abs{w_\gamma}e^{\left(\frac{\Delta+1}{2\Delta+1}+\delta\right)\abs{\gamma}} \leq \frac{\Delta+1}{2\Delta+1}. \notag
    \end{equation}
    Fix a polymer $\gamma^*$. By summing over all vertices in $\gamma^*$, we obtain
    \begin{equation}
        \sum_{\gamma\nsim\gamma^*}\abs{w_{\gamma}}e^{\left(\frac{\Delta+1}{2\Delta+1}+\delta\right)\abs{\gamma}} \leq \frac{\Delta+1}{2\Delta+1}\abs{\gamma^*}. \notag
    \end{equation}
    Now, by applying the main theorem of Ref.~\cite{kotecky1986cluster} with $a(\gamma)=\frac{\Delta+1}{2\Delta+1}\abs{\gamma}$ and $d(\gamma)=\delta\abs{\gamma}$, we have that the cluster expansion converges absolutely, $Z(\mathcal{C},w)\neq0$, and
    \begin{equation}
        \sum_{\substack{\Gamma\in\mathcal{G}_C \\ \Gamma\ni\gamma_u }}\abs{\varphi(H_\Gamma)\prod_{\gamma\in\Gamma}w_\gamma}e^{\delta\abs{\Gamma}} \leq \frac{\Delta+1}{2\Delta+1}. \notag
    \end{equation}
    By summing over all vertices in $G$, we obtain
    \begin{equation}
        \sum_{\substack{\Gamma\in\mathcal{G}_C \\ \abs{\Gamma} \geq m }}\abs{\varphi(H_\Gamma)\prod_{\gamma\in\Gamma}w_\gamma} \leq \frac{\Delta+1}{2\Delta+1}\abs{G}e^{-\delta m}, \notag
    \end{equation}
    completing the proof.
\end{proof}

Lemma~\ref{lemma:ConvergenceClusterExpansion} implies that to obtain a multiplicative $\epsilon$-approximation to $Z(\mathcal{C},w)$, it is sufficient to compute the truncated cluster expansion $T_m(Z(\mathcal{C},w))$ to order $m=O(\log(\abs{G}/\epsilon))$. We now establish an algorithm computing $T_m(Z(\mathcal{C},w))$ in time $\exp(O(m))\cdot\abs{G}^{O(1)}$, which requires the following two lemmas.

\begin{lemma}
    \label{lemma:ListClustersAlgorithm}
    Let $G$ be a graph of maximum degree at most $\Delta$. Further let $(\mathcal{C},w,\sim)$ be an abstract polymer model such that the polymers are connected induced subgraphs of $G$ and that two polymers $\gamma$ and $\gamma'$ are compatible if and only if $V(\gamma) \cap V(\gamma') = \varnothing$ and $G[V(\gamma) \cup V(\gamma')] = \gamma\cup\gamma'$. The clusters of size at most $m$ can be listed in time $\exp(O(m))\cdot\abs{G}^{O(1)}$.
\end{lemma}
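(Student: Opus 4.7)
The plan is to enumerate clusters in two stages: first produce a list of all polymers of size at most $m$, and then assemble clusters by adding polymers one at a time while ensuring the incompatibility graph remains connected.

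For the first stage, the polymers are exactly the connected induced subgraphs of $G$ of order at most $m$. A standard breadth-first growth procedure enumerates these: for each vertex $v\in V(G)$, iteratively grow connected vertex subsets containing $v$ by adding one boundary vertex at a time, pruning to avoid duplicates via a canonical ordering. Invoking the classical bound cited in the proof of Lemma~\ref{lemma:ConvergenceClusterExpansion}, namely that the number of connected subgraphs of size $k$ through a fixed vertex is at most $\tfrac{(k\Delta)^{k-1}}{k!}\leq(e\Delta)^{k}$, the output list has cardinality at most $\abs{G}\cdot\exp(O(m))$ and can be produced in time $\exp(O(m))\cdot\abs{G}$.

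For the second stage, I would exploit the structural observation that if $\Gamma=(\gamma_{1},\ldots,\gamma_{t})$ is a cluster, then its support $V(\Gamma)\coloneqq\bigcup_{i}V(\gamma_{i})$ induces a connected subgraph of $G$ of order at most $m$: each edge of $H_{\Gamma}$ corresponds either to a shared vertex or to a $G$-edge between the two polymers, and each polymer is itself connected, so connectedness of $H_{\Gamma}$ propagates to connectedness of $G[V(\Gamma)]$. Consequently, starting from each polymer $\gamma_{1}$ in the list, I grow ordered tuples by repeatedly choosing $\gamma_{j+1}$ to be any polymer incompatible with some $\gamma_{i}$ with $i\leq j$. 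Any such candidate must contain a vertex in the closed $G$-neighbourhood of $V(\{\gamma_{1},\ldots,\gamma_{j}\})$, a set of at most $(\Delta+1)m$ vertices; by precomputing an index of the polymer list keyed by incident vertex, the candidates are retrieved in time polynomial in $\abs{G}$ and $m$, and each cluster is output once, together with its total size so that only those of size at most $m$ are retained.

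To bound the overall running time I would use a tree-function counting argument analogous to the one in the proof of Lemma~\ref{lemma:ConvergenceClusterExpansion}, applied with non-negative trivial weights chosen to satisfy the Koteck\'y--Preiss criterion. This yields that the number of clusters of size at most $m$ whose support contains any given vertex is at most $\exp(O(m))$, hence at most $\exp(O(m))\cdot\abs{G}$ in total, and each is produced in time $\exp(O(m))\cdot\abs{G}^{O(1)}$ by the growth procedure.

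The main obstacle is ensuring the enumeration runs in time $\exp(O(m))$ rather than $\exp(O(m\log m))$: a naive accounting that sums independently over all compositions of $m$ into polymer sizes and multiplies the per-polymer enumeration cost by an anchor-vertex factor of order $m\Delta$ yields the weaker bound. The fix is to always anchor each new polymer at an explicit vertex in the extended neighbourhood of the current partial cluster and to charge the enumeration cost directly against the tree-function bound, which absorbs the combinatorial factors into a single exponential in $m$.
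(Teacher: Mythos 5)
Your overall strategy matches the paper's: enumerate the polymers of size at most $m$ by growing connected induced subgraphs from each vertex (the paper cites Patel--Regts, Lemma~3.4 for exactly this), then assemble clusters using the correct structural observation that the support $\bigcup_{\gamma\in\Gamma}V(\gamma)$ of a cluster induces a connected subgraph of $G$ of order at most $m$ (the paper defers this stage to Helmuth--Perkins--Regts, Theorem~6). The skeleton is sound; the problem is the quantitative step that you yourself flag as the main obstacle, and your proposed fix does not close it.

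Two concrete issues. First, the paper defines a cluster as an \emph{ordered} tuple, and the number of ordered clusters of size $m$ can genuinely be $\exp(\Theta(m\log m))$: take $m$ consecutive singleton polymers along a path in $G$; their incompatibility graph is a path, so every one of the $m!$ orderings is a distinct cluster. Hence a procedure that ``outputs each cluster once'' in the ordered sense cannot run in time $\exp(O(m))\cdot\abs{G}^{O(1)}$. One must enumerate clusters up to reordering, which suffices because both $\varphi(H_\Gamma)$ and $\prod_{\gamma\in\Gamma}w_\gamma$ are permutation-invariant. Second, the Koteck\'y--Preiss criterion with trivial weights does not yield the claimed $\exp(O(m))$ count: its conclusion bounds $\sum_{\Gamma}\abs{\varphi(H_\Gamma)}\prod_{\gamma}\abs{w_\gamma}$, and $\abs{\varphi(H_\Gamma)}$ carries a factor as small as $1/t!$ for a cluster of $t$ polymers, so stripping the Ursell weight to count clusters reintroduces $m!$; likewise your anchored growth, charging up to $(\Delta+1)m$ anchor choices for each of up to $m$ added polymers, gives $m^{O(m)}$ unless the $t!$ from orderings is divided out. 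The argument that actually works (and is the content of the cited Theorem~6) is: for each of the at most $\abs{G}\cdot(e\Delta)^{m}$ connected supports $S$, the number of \emph{multisets} of connected induced subgraphs of $G$ with vertices in $S$ and total size at most $m$ is $\exp(O(m))$ --- for instance by bounding the coefficient of $x^{m}$ in $\prod_{k\geq1}(1-x^{k})^{-m(e\Delta)^{k}}$ at $x=1/(2e\Delta)$ --- and connectivity of the incompatibility graph is then verified for each candidate in time polynomial in $m$. With that replacement your proof goes through.
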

\begin{proof}
    Our proof follows that of Ref.~\cite[Theorem 6]{helmuth2020algorithmic}. We enumerate all connected induced subgraphs of $G$ of order $m$ in time $\exp(O(m))\cdot\abs{G}^{O(1)}$ by Ref.~\cite[Lemma 3.4]{patel2017deterministic}. The clusters of size at most $m$ can then be listed in time $\exp(O(m))\cdot\abs{G}^{O(1)}$ as in the proof of Ref.~\cite[Theorem 6]{helmuth2020algorithmic}.
\end{proof}

\begin{lemma}
    \label{lemma:UrsellFunctionAlgorithm}
    The Ursell function $\varphi(H)$ can be computed in time $\exp(O(\abs{H}))$.
\end{lemma}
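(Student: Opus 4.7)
The plan is to compute $\varphi(H)$ by reducing the sum over spanning connected edge subsets to a subset dynamic program on $V(H)$. The naive enumeration has up to $2^{\abs{E(H)}}=2^{O(\abs{H}^2)}$ terms, which is too slow, so the idea is to group terms by the connected component containing a fixed vertex.

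First, I would introduce, for every $U\subseteq V(H)$, the unrestricted signed sum
\begin{equation}
    q(U) \coloneqq \sum_{S\subseteq E(H[U])}(-1)^{\abs{S}} \notag
\end{equation}
and the spanning-connected signed sum
\begin{equation}
    p(U) \coloneqq \sum_{\substack{S\subseteq E(H[U]) \\ (U,S)\text{ connected}}}(-1)^{\abs{S}}. \notag
\end{equation}
A toggle argument (for any fixed $e\in E(H[U])$, pair $S$ with $S\triangle\{e\}$; these contribute with opposite signs) immediately gives $q(U)=1$ if $E(H[U])=\varnothing$ and $q(U)=0$ otherwise, so $q(U)$ is trivial to evaluate.

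Next, I would derive a recursion for $p$. Fix any $v\in U$ and stratify edge subsets $S\subseteq E(H[U])$ by the connected component $C\ni v$ of $(U,S)$. Since no edge of $S$ can cross the cut $(C,U\setminus C)$ by definition of a connected component, the contributions factor as
\begin{equation}
    q(U) = \sum_{v\in C\subseteq U}p(C)\cdot q(U\setminus C). \notag
\end{equation}
Isolating the $C=U$ term and using $p(\{v\})=1$ as the base case yields
\begin{equation}
    p(U) = q(U) - \sum_{v\in C\subsetneq U}p(C)\cdot q(U\setminus C). \notag
\end{equation}

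Finally, I would run the standard subset dynamic program: enumerate the subsets $U\subseteq V(H)$ containing $v$ in order of increasing cardinality and evaluate $p(U)$ via the recursion using the precomputed values $p(C)$ for $v\in C\subsetneq U$ and the $O(\abs{H})$-computable $q(U\setminus C)$. Each evaluation costs $O(2^{\abs{U}})$ arithmetic operations, and summing over all such $U$ gives a total of $O(3^{\abs{H}})=\exp(O(\abs{H}))$ operations, since each vertex of $V(H)\setminus\{v\}$ lies in $C$, in $U\setminus C$, or outside $U$. The Ursell function is then recovered as $\varphi(H)=p(V(H))/\abs{H}!$ by one final division. There is no real obstacle here; the only subtlety is verifying the component-splitting identity and checking the base case.
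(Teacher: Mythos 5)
Your proof is correct and self-contained, and it takes a genuinely different route from the paper, whose ``proof'' is only a citation: the paper defers to the vertex-exponential-time Tutte polynomial algorithm of Bj\"orklund, Husfeldt, Kaski, and Koivisto, via Lemma 5 of Helmuth--Perkins--Regts, which identifies the signed sum over spanning connected edge sets with a Tutte evaluation. You instead give a direct subset dynamic program for $p(V(H))=\abs{H}!\,\varphi(H)$. The two key steps both check out: the toggle argument gives $q(U)=\prod_{e\in E(H[U])}(1-1)$, i.e.\ $1$ if $H[U]$ is edgeless and $0$ otherwise; and the stratification $q(U)=\sum_{v\in C\subseteq U}p(C)\,q(U\setminus C)$ is a genuine bijection, since no edge of $S$ leaves the component $C$ of $v$, the edges inside $C$ must connect $C$, and conversely any pair $(S_1,S_2)$ with $(C,S_1)$ connected and $S_2\subseteq E(H[U\setminus C])$ reassembles into a set whose $v$-component is exactly $C$. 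Isolating the term $C=U$ (using $q(\varnothing)=1$) gives the recursion, and the $(C,U)$ pairs with $v\in C\subseteq U$ number $3^{\abs{H}-1}$, so the total work is $\exp(O(\abs{H}))$ as claimed. What your approach buys is elementarity --- no Tutte polynomial or subset-convolution machinery --- at no asymptotic cost, which is all the lemma requires. Two minor bookkeeping points for a full write-up: each $q(U\setminus C)$ evaluation is an edge-existence check costing time polynomial in $\abs{H}$, and the integers $p(U)$ can have magnitude $2^{O(\abs{H}^2)}$, so individual arithmetic operations cost time polynomial in $\abs{H}$; both factors are absorbed into $\exp(O(\abs{H}))$, and the final division by $\abs{H}!$ is exact rational arithmetic.
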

\begin{proof}
    This is a result of Ref.~\cite{bjorklund2008computing}; see Ref.~\cite[Lemma 5]{helmuth2020algorithmic}.
\end{proof}

\begin{lemma}
    \label{lemma:TruncatedClusterExpansionApproximationAlgorithm}
    Fix $\Delta\in\mathbb{Z}_{\geq2}$. Let $G$ be a graph of maximum degree at most $\Delta$. Further let $(\mathcal{C},w,\sim)$ be an abstract polymer model such that the polymers are connected induced subgraphs of $G$ and that two polymers $\gamma$ and $\gamma'$ are compatible if and only if $V(\gamma) \cap V(\gamma') = \varnothing$ and $G[V(\gamma) \cup V(\gamma')] = \gamma\cup\gamma'$. Suppose that, for all polymers $\gamma\in\mathcal{C}$, the weight $w_\gamma$ can be computed in time $\exp(O(\abs{\gamma}))$. Then the truncated cluster expansion $T_m(Z(\mathcal{C},w))$ can be computed in time $\exp(O(m))\cdot\abs{G}^{O(1)}$.
\end{lemma}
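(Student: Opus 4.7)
The plan is to combine Lemma~\ref{lemma:ListClustersAlgorithm} and Lemma~\ref{lemma:UrsellFunctionAlgorithm} with the assumed polymer-weight oracle. First, I would invoke Lemma~\ref{lemma:ListClustersAlgorithm} to enumerate all clusters $\Gamma \in \mathcal{G}_C$ with $\abs{\Gamma} \leq m$ in time $\exp(O(m)) \cdot \abs{G}^{O(1)}$. In particular, this procedure also certifies that the total number of such clusters is at most $\exp(O(m)) \cdot \abs{G}^{O(1)}$.

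Next, for each enumerated cluster $\Gamma$, I would compute its contribution $\varphi(H_\Gamma) \prod_{\gamma \in \Gamma} w_\gamma$ to $T_m(Z(\mathcal{C},w))$. The incompatibility graph $H_\Gamma$ can be built in time polynomial in $\abs{\Gamma}$ by checking pairwise compatibility of the polymers in $\Gamma$, which is a purely graph-theoretic test on the vertex sets inside $G$. Since every polymer has size at least one, $H_\Gamma$ has at most $m$ vertices, so by Lemma~\ref{lemma:UrsellFunctionAlgorithm} the Ursell function $\varphi(H_\Gamma)$ is computable in time $\exp(O(m))$. For the product of weights, each factor $w_\gamma$ takes time $\exp(O(\abs{\gamma}))$ by hypothesis, and since $\sum_{\gamma \in \Gamma} \abs{\gamma} = \abs{\Gamma} \leq m$, the total time spent on all weights in a single cluster is bounded by $\exp(O(m))$.

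Summing the per-cluster cost of $\exp(O(m))$ over $\exp(O(m)) \cdot \abs{G}^{O(1)}$ clusters yields the claimed total running time $\exp(O(m)) \cdot \abs{G}^{O(1)}$. There is no real obstacle here; the proof is essentially bookkeeping that packages together the two previous lemmas with the weight-computation assumption, following the same template as Ref.~\cite[Theorem~6]{helmuth2020algorithmic}.
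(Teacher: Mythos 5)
Your proposal is correct and follows essentially the same argument as the paper: enumerate the clusters of size at most $m$ via Lemma~\ref{lemma:ListClustersAlgorithm}, compute each Ursell function via Lemma~\ref{lemma:UrsellFunctionAlgorithm} and each weight via the assumed $\exp(O(\abs{\gamma}))$-time oracle, and sum the contributions. The extra bookkeeping you include (constructing $H_\Gamma$, bounding the number of clusters) is consistent with, and merely elaborates on, the paper's proof.
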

\begin{proof}
    We can list all clusters of size at most $m$ in time $\exp(O(m))\cdot\abs{G}^{O(1)}$ by Lemma~\ref{lemma:ListClustersAlgorithm}. For each of these clusters, we can compute the Ursell function in time $\exp(O(m))$ by Lemma~\ref{lemma:UrsellFunctionAlgorithm}, and the polymer weights in time $\exp(O(m))$ by assumption. Hence, the truncated cluster expansion $T_m(Z(\mathcal{C},w))$ can be computed in time $\exp(O(m))\cdot\abs{G}^{O(1)}$.
\end{proof}

Combining Lemma~\ref{lemma:ConvergenceClusterExpansion} with Lemma~\ref{lemma:TruncatedClusterExpansionApproximationAlgorithm} proves Theorem~\ref{theorem:ApproximationAlgorithmAbstractPolymerModelPartitionFunction}.

\section{Applications}
\label{section:Applications}

In this section, we apply Theorem~\ref{theorem:ApproximationAlgorithmAbstractPolymerModelPartitionFunction} to establish an efficient algorithm for classes of counting problems in local lemma regimes. This includes the probability of intersection, the dimension of intersection for commuting projectors, and the dimension of intersection for general projectors. For the general case, we establish two algorithms: (1) under a global inclusion-exclusion stability condition, and (2) under a spectral gap assumption yielding an affine approximation.

\subsection{Probability of Intersection}
\label{section:ProbabilityOfIntersection}

In this section, we study the problem of approximating the probability of intersection. That is, given a set of events in a probability space, we aim to approximate the probability that all of them occur. Computing the probability of intersection is \#\textsc{P}-hard in general, and admits no fully polynomial-time randomised approximation scheme unless \textsc{RP} equals \textsc{NP}~\cite{dyer2004relative}. Nevertheless, we show that this problem admits an efficient approximation algorithm in a local lemma regime. As a corollary, we obtain an efficient algorithm for approximating the number of satisfying assignments of a CNF formula in a local lemma regime.

The Lov\'asz local lemma~\cite{erdos1975problems, shearer1985problem} provides a sufficient condition under which the probability of intersection is positive. Moser and Tardos~\cite{moser2010constructive} provided a constructive version of the lemma, establishing an efficient randomised algorithm for finding an assignment in the intersection of the events. The problem of approximately counting and sampling in local lemma regimes has received considerable attention. In the general setting, Barvinok~\cite{barvinok2025computing} established a quasi-polynomial time algorithm for approximately counting the intersection of events in a local lemma regime. In more restrictive settings, efficient approximate counting and sampling algorithms have been established for structured cases~\cite{moitra2019approximate, guo2019uniform, feng2021sampling, jain2022towards, jain2021sampling, he2021perfect, he2022sampling, feng2023towards, he2023deterministic, wang2024sampling}.

We consider a set of events $\{E_v\}_{v \in V(G)}$ indexed by the vertices of a finite graph $G$. The graph $G$ is called a \emph{strong dependency graph} of the events $\{E_v\}_{v \in V(G)}$ if, for every pair of disjoint vertex subsets $S,T \subseteq V(G)$ such that $G[S \cup T]=G[S] \cup G[T]$, the sets of events $\{E_v\}_{v \in S}$ and $\{E_v\}_{v \in T}$ are independent. Note that this notion of a dependency graph is stronger than the conventional definition used in the Lov\'asz local lemma. We are interested in the probability of intersection $\Pr\left[\bigcap_{v \in V(G)}E_v\right]$. Our algorithmic result concerning the approximation of $\Pr\left[\bigcap_{v \in V(G)}E_v\right]$ is as follows.

\begin{theorem}
    \label{theorem:ApproximationAlgorithmProbabilityIntersection}
    Fix $\delta>0$ and $\Delta\in\mathbb{Z}_{\geq2}$. Let $\{E_v\}_{v \in V(G)}$ be a set of events in a probability space with strong dependency graph $G$ of maximum degree $\Delta$ and chromatic number $\chi$. Suppose that, for all $v \in V(G)$,
    \begin{equation}
        \Pr\left[\overline{E_v}\right] \leq \left(\frac{1}{e^{1+\delta}(2\Delta+1)}\right)^\chi. \notag 
    \end{equation}
    Then the cluster expansion for $\log\left(\Pr\left[\bigcap_{v \in V(G)}E_v\right]\right)$ converges absolutely and the probability of $\bigcap_{v \in V(G)}E_v$ is positive. Further, if for all $U \subseteq V(G)$, $\Pr\left[\bigcap_{v \in U}\overline{E_v}\right]$ can be computed in time $\exp(O(\abs{U}))$, then there is a fully polynomial-time approximation scheme for the probability of $\bigcap_{v \in V(G)}E_v$.
\end{theorem}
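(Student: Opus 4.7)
The plan is to recast $\Pr\left[\bigcap_{v \in V(G)} E_v\right]$ as the partition function of an abstract polymer model satisfying the hypotheses of Theorem~\ref{theorem:ApproximationAlgorithmAbstractPolymerModelPartitionFunction}. I would start from the inclusion-exclusion identity
\begin{equation}
    \Pr\left[\bigcap_{v \in V(G)} E_v\right] = \sum_{S \subseteq V(G)} (-1)^{\abs{S}} \Pr\left[\bigcap_{v \in S} \overline{E_v}\right], \notag
\end{equation}
and group each $S$ by the connected components of the induced subgraph $G[S]$. Because $G$ is a strong dependency graph, the event $\bigcap_{v \in S}\overline{E_v}$ factorises as a product over these components. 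This rewrites the above sum as $Z(\mathcal{C},w)$, where $\mathcal{C}$ is the set of connected induced subgraphs of $G$, compatibility of $\gamma$ and $\gamma'$ is defined by $V(\gamma)\cap V(\gamma')=\varnothing$ together with $G[V(\gamma)\cup V(\gamma')]=\gamma\cup\gamma'$, and the weights are
\begin{equation}
    w_\gamma = (-1)^{\abs{V(\gamma)}}\Pr\left[\bigcap_{v \in V(\gamma)} \overline{E_v}\right]. \notag
\end{equation}

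Next I would verify the weight bound demanded by Theorem~\ref{theorem:ApproximationAlgorithmAbstractPolymerModelPartitionFunction}. Given a polymer $\gamma$, fix a proper colouring of $G[V(\gamma)]$ using at most $\chi$ colours and let $I\subseteq V(\gamma)$ be a largest colour class, so $\abs{I}\geq\abs{\gamma}/\chi$. Since $I$ is an independent set of $G$, applying the strong dependency property to bipartitions of $I$ shows that $\{\overline{E_v}\}_{v\in I}$ are mutually independent. Combined with monotonicity of probability in the event $\bigcap_{v\in V(\gamma)}\overline{E_v}\subseteq\bigcap_{v\in I}\overline{E_v}$, this yields
\begin{equation}
    \abs{w_\gamma} \leq \Pr\left[\bigcap_{v \in I} \overline{E_v}\right] = \prod_{v \in I}\Pr\left[\overline{E_v}\right] \leq \left(\frac{1}{e^{1+\delta}(2\Delta+1)}\right)^{\chi\abs{I}} \leq \left(\frac{1}{e^{1+\delta}(2\Delta+1)}\right)^{\abs{\gamma}}, \notag
\end{equation}
which is exactly the decay required. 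Under the computational hypothesis on $\Pr\left[\bigcap_{v \in U} \overline{E_v}\right]$, each weight $w_\gamma$ is computable in time $\exp(O(\abs{\gamma}))$.

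With these ingredients in place, Theorem~\ref{theorem:ApproximationAlgorithmAbstractPolymerModelPartitionFunction} delivers absolute convergence of the cluster expansion for $\log(Z(\mathcal{C},w))$, nonvanishing of $Z(\mathcal{C},w)$, and a fully polynomial-time approximation scheme for $Z(\mathcal{C},w)$. Since $Z(\mathcal{C},w) = \Pr\left[\bigcap_{v \in V(G)} E_v\right]$ is a nonnegative real number, nonvanishing upgrades to strict positivity, and the approximation scheme transfers verbatim. The point requiring the most care is the weight bound: the hypothesis is weakened by a $\chi$-th power precisely because, within a connected polymer on $\abs{\gamma}$ vertices, the strong dependency property only produces an independent subfamily of size at least $\abs{\gamma}/\chi$, and the chromatic number exactly absorbs this loss. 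The remainder is a direct invocation of Theorem~\ref{theorem:ApproximationAlgorithmAbstractPolymerModelPartitionFunction}.
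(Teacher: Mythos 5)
Your proposal is correct and follows essentially the same route as the paper: the same inclusion-exclusion polymer representation with weights $w_\gamma = (-1)^{\abs{\gamma}}\Pr\left[\bigcap_{v \in V(\gamma)}\overline{E_v}\right]$, the same verification of the decay condition, and the same invocation of Theorem~\ref{theorem:ApproximationAlgorithmAbstractPolymerModelPartitionFunction}. The only (harmless) difference is in the weight bound: you use monotonicity together with independence on a largest colour class of size at least $\abs{\gamma}/\chi$, whereas the paper applies H\"older's inequality across all $\chi$ colour classes to obtain $\left(\prod_{v\in V(\gamma)}\Pr\left[\overline{E_v}\right]\right)^{1/\chi}$ --- under the uniform per-event hypothesis both give exactly the required bound.
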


Our algorithm is similar to that of Barvinok's~\cite{barvinok2025computing}, which is obtained via an interpolation-based approach~\cite{barvinok2016combinatorics}; however, ours proceeds directly via the cluster expansion. Barvinok's approach yields a quasi-polynomial time algorithm and further outlines how the method of Patel and Regts~\cite{patel2017deterministic} can be applied to obtain a polynomial-time algorithm. The resulting bound is more permissive than ours, allowing for significantly larger event probabilities. Barvinok also analysed the problem through a cluster expansion approach via the Koteck\'y-Preiss convergence
criterion~\cite{kotecky1986cluster}, obtaining a bound of the form $\Delta^{-O(\Delta)}$. Our bound recovers this condition once a bound on the chromatic number is applied.

We apply Theorem~\ref{theorem:ApproximationAlgorithmProbabilityIntersection} to the problem of counting satisfying assignments of $k$-CNF formulae. We consider a set of $k$-clauses $\{C_v\}_{v \in V(G)}$ over Boolean variables. In this context, each event corresponds to the satisfaction of a clause under a uniformly random assignment to the variables, and the strong dependency graph $G$ has an edge between vertices $u$ and $v$ if and only if the clauses $C_u$ and $C_v$ share a variable. We are interested in the probability of intersection $\Pr\left[\bigwedge_{v \in V(G)}C_v\right]$, which is proportional to the number of satisfying assignments up to an easily computed factor. Our algorithmic result concerning the approximation of $\Pr\left[\bigwedge_{v \in V(G)}C_v\right]$ is as follows.

\begin{corollary}
    Fix $\delta>0$ and $k,\Delta\in\mathbb{Z}_{\geq2}$. Let $\{C_v\}_{v \in V(G)}$ be a set of $k$-clauses with strong dependency graph $G$ of maximum degree $\Delta$ and chromatic number $\chi$. Suppose that 
    \begin{equation}
        k \geq \frac{\chi}{\log(2)}\left(\log(2\Delta+1)+1+\delta\right). \notag
    \end{equation}
    Then $\bigwedge_{v \in V(G)}C_v$ is satisfiable and there is a fully polynomial-time approximation scheme for counting the number of satisfying assignments.
\end{corollary}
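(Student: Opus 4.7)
The plan is to apply Theorem~\ref{theorem:ApproximationAlgorithmProbabilityIntersection} directly, taking the probability space to be the uniform distribution over Boolean assignments to the variables and $E_v$ to be the event that clause $C_v$ is satisfied. Under this choice, $G$ is a strong dependency graph of $\{E_v\}_{v \in V(G)}$ because disjoint vertex subsets $S,T \subseteq V(G)$ with $G[S \cup T] = G[S] \cup G[T]$ correspond to sets of clauses on disjoint variable sets, which are independent under the uniform distribution.

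The next step is to verify the probability hypothesis. Since each $C_v$ is a $k$-clause, there is exactly one falsifying assignment to its $k$ variables, so $\Pr[\overline{E_v}] = 2^{-k}$. Taking logarithms of the required inequality $2^{-k} \leq \bigl(\tfrac{1}{e^{1+\delta}(2\Delta+1)}\bigr)^{\chi}$ gives precisely the stated lower bound on $k$.

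Next I would verify the algorithmic hypothesis: for each $U \subseteq V(G)$, the joint probability $\Pr\bigl[\bigcap_{v \in U}\overline{E_v}\bigr]$ is computable in time $\exp(O(\abs{U}))$. The event $\bigcap_{v \in U}\overline{E_v}$ fixes the value of every variable appearing in some $C_v$ with $v \in U$; the probability is therefore $2^{-\abs{\mathrm{var}(U)}}$ if the forced assignments are consistent and $0$ otherwise. Since each clause involves $k$ variables and $k$ is a fixed constant, $\abs{\mathrm{var}(U)} \leq k\abs{U} = O(\abs{U})$, and consistency can be checked in time $O(k\abs{U})$. Theorem~\ref{theorem:ApproximationAlgorithmProbabilityIntersection} then supplies an FPTAS for $p \coloneqq \Pr\bigl[\bigwedge_{v \in V(G)} C_v\bigr]$ together with the positivity $p > 0$, which establishes satisfiability.

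Finally, to pass from the probability to the count, I would use that the number of satisfying assignments equals $2^n p$, where $n$ is the number of Boolean variables appearing in the formula; multiplying an $\epsilon$-approximation of $p$ by the exactly computable factor $2^n$ preserves the multiplicative error and runs in polynomial time, yielding an FPTAS for the number of satisfying assignments. There is no real obstacle here beyond the bookkeeping in the algorithmic hypothesis; the only point requiring care is that the $\exp(O(\abs{U}))$ estimate depends on $k$ being a fixed constant, which is consistent with the standing assumptions of the corollary.
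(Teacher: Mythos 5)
Your proposal is correct and follows essentially the same route as the paper: verify that $\Pr[\lnot C_v]=2^{-k}$ satisfies the hypothesis of Theorem~\ref{theorem:ApproximationAlgorithmProbabilityIntersection} under the stated bound on $k$, check that the joint probabilities $\Pr\bigl[\bigwedge_{v\in U}\lnot C_v\bigr]$ are computable in time $\exp(O(\abs{U}))$ (the paper does this by direct enumeration; your consistency-check argument is a sharper but equivalent observation), and convert the probability to a count via the factor $2^n$. No gaps.
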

\begin{proof}
    For any $v \in V(G)$, we have
    \begin{equation}
        \Pr[\lnot C_v] = \frac{1}{2^k} \leq \left(\frac{1}{e^{1+\delta}(2\Delta+1)}\right)^\chi. \notag
    \end{equation}
    Further, for any $U \subseteq V(G)$, $\Pr\left[\bigwedge_{v \in U}\lnot C_v\right]$ can be computed in time $\exp(O(\abs{U}))$ by direct enumeration. The proof then follows from Theorem~\ref{theorem:ApproximationAlgorithmProbabilityIntersection}.
\end{proof}

Previous work established approximate counting and sampling algorithms for satisfying assignments of CNF formulae under the condition that $k\geq\frac{1}{\log(2)}\left(\log(k)+5\log(\Delta+1)+O(1)\right)$~\cite{he2023deterministic}. Our bound exhibits the same logarithmic dependence on the dependency degree but scales linearly with the chromatic number, offering an improvement only when the dependency graph has small chromatic number. This logarithmic dependence on the dependency degree is asymptotically optimal, assuming \textsc{P} is not equal to \textsc{NP}~\cite{bezakova2019approximation, galanis2023inapproximability}.

We prove Theorem~\ref{theorem:ApproximationAlgorithmProbabilityIntersection} by showing that the conditions required to apply Theorem~\ref{theorem:ApproximationAlgorithmAbstractPolymerModelPartitionFunction} are satisfied. That is, we show that (1) the probability of intersection $\Pr\left[\bigcap_{v \in V(G)}E_v\right]$ admits a suitable abstract polymer model representation, and (2) the polymer weights satisfy the desired bound.

\begin{lemma}
    \label{lemma:ProbabilityIntersectionAbstractPolymerModel}
    The probability of intersection admits the following abstract polymer model representation.
    \begin{equation}
        \Pr\left[\bigcap_{v \in V(G)}E_v\right] = \sum_{\Gamma\in\mathcal{G}}\prod_{\gamma\in\Gamma}w_\gamma, \notag
    \end{equation}
    where
    \begin{equation}
        w_\gamma \coloneqq (-1)^\abs{\gamma}\Pr\left[\bigcap_{v \in V(\gamma)}\overline{E_v}\right]. \notag
    \end{equation}
\end{lemma}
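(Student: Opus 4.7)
The plan is to expand the probability of intersection by inclusion-exclusion on the complementary events $\overline{E_v}$, use the strong dependency graph property to factorise the resulting probabilities across the connected components of the chosen vertex set, and then reinterpret the sum over vertex subsets as a sum over admissible polymer sets via the connected-component decomposition.

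Concretely, the first step is to write $\Pr\bigl[\bigcap_{v \in V(G)} E_v\bigr] = 1 - \Pr\bigl[\bigcup_{v \in V(G)} \overline{E_v}\bigr]$ and apply inclusion-exclusion to obtain
\begin{equation}
    \Pr\left[\bigcap_{v \in V(G)} E_v\right] = \sum_{S \subseteq V(G)} (-1)^{\abs{S}} \Pr\left[\bigcap_{v \in S} \overline{E_v}\right], \notag
\end{equation}
with the $S = \varnothing$ term contributing $1$. For the second step, I fix $S \subseteq V(G)$ and let $\gamma_1, \ldots, \gamma_k$ be the connected components of $G[S]$, each viewed as a connected induced subgraph of $G$ and hence as a polymer. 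Distinct components have disjoint vertex sets and no edges between them in $G$, so $V(\gamma_i) \cap V(\gamma_j) = \varnothing$ and $G[V(\gamma_i) \cup V(\gamma_j)] = \gamma_i \cup \gamma_j$ whenever $i \neq j$; in particular they form a pairwise compatible collection of polymers in the sense required by Theorem~\ref{theorem:ApproximationAlgorithmAbstractPolymerModelPartitionFunction}. A straightforward induction on $k$, repeatedly invoking the strong dependency graph condition on $V(\gamma_1)$ versus the union of the remaining components, yields mutual independence of the families $\{E_v\}_{v \in V(\gamma_i)}$, and hence
\begin{equation}
    (-1)^{\abs{S}} \Pr\left[\bigcap_{v \in S} \overline{E_v}\right] = \prod_{i=1}^k (-1)^{\abs{\gamma_i}} \Pr\left[\bigcap_{v \in V(\gamma_i)} \overline{E_v}\right] = \prod_{i=1}^k w_{\gamma_i}. \notag
\end{equation}

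The third step is to observe the bijection between subsets $S \subseteq V(G)$ and admissible polymer sets $\Gamma \in \mathcal{G}$: send $S$ to the set of connected components of $G[S]$, and send $\Gamma$ to $\bigcup_{\gamma \in \Gamma} V(\gamma)$. The compatibility relation from Theorem~\ref{theorem:ApproximationAlgorithmAbstractPolymerModelPartitionFunction} ensures that any admissible $\Gamma$ is exactly the connected-component decomposition of the vertex set it occupies, so these maps are mutually inverse. Substituting the factorisation above then rewrites the inclusion-exclusion sum as $\sum_{\Gamma \in \mathcal{G}} \prod_{\gamma \in \Gamma} w_\gamma$, as claimed. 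The only point requiring genuine care, and the main obstacle such as it is, is the lift from the pairwise form of the strong dependency condition to mutual independence of the $k$ component families; this is handled by the inductive argument above, while the remaining manipulations are essentially bookkeeping.
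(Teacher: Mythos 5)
Your proof is correct and follows essentially the same route as the paper's: De Morgan plus inclusion--exclusion to get $\sum_{S \subseteq V(G)}(-1)^{\abs{S}}\Pr\bigl[\bigcap_{v \in S}\overline{E_v}\bigr]$, factorisation over the connected components of $G[S]$ via the strong dependency graph property, and the identification of subsets $S$ with admissible polymer sets. The only difference is that you spell out the induction lifting the strong dependency condition to mutual independence of the component families and the bijection with $\mathcal{G}$, both of which the paper leaves implicit.
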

\begin{proof}
    By De Morgan's law,
    \begin{equation}
        \Pr\left[\bigcap_{v \in V(G)}E_v\right] = 1-\Pr\left[\bigcup_{v \in V(G)}\overline{E_v}\right]. \notag
    \end{equation}
    Now, by the principle of inclusion-exclusion, we obtain
    \begin{equation}
        \Pr\left[\bigcup_{v \in V(G)}\overline{E_v}\right] = 1-\sum_{\substack{S \subseteq V(G) \\ S \neq \varnothing}}(-1)^{\abs{S}+1}\Pr\left[\bigcap_{v \in S}\overline{E_v}\right] = \sum_{S \subseteq V(G)}(-1)^\abs{S}\Pr\left[\bigcap_{v \in S}\overline{E_v}\right]. \notag
    \end{equation}
    For a subset $S \subseteq V(G)$, let $\Gamma_S$ denote the maximally connected components of the induced subgraph $G[S]$. By factorising over these components, we have
    \begin{align}
        \Pr\left[\bigcap_{v \in V(G)}E_v\right] &= \sum_{S \subseteq V}\prod_{\gamma\in\Gamma_S}(-1)^\abs{\gamma}\Pr\left[\bigcap_{v \in V(\gamma)}\overline{E_v}\right] \notag \\
        &= \sum_{\Gamma\in\mathcal{G}}\prod_{\gamma\in\Gamma}w_\gamma. \notag
    \end{align}
    This completes the proof.
\end{proof}

\begin{lemma}
    \label{lemma:ProbabilityIntersectionWeightBound}
    Fix $\delta>0$ and $\Delta\in\mathbb{Z}_{\geq2}$. Let $\{E_v\}_{v \in V(G)}$ be a set of events in a probability space with strong dependency graph $G$ of maximum degree $\Delta$ and chromatic number $\chi$. Suppose that, for all $v \in V(G)$, 
    \begin{equation}
        \Pr\left[\overline{E_v}\right] \leq \left(\frac{1}{e^{1+\delta}(2\Delta+1)}\right)^\chi. \notag 
    \end{equation}
    Then, for all polymers $\gamma\in\mathcal{C}$, the weight $w_\gamma$ satisfies
    \begin{equation}
        \abs{w_\gamma} \leq \left(\frac{1}{e^{1+\delta}(2\Delta+1)}\right)^\abs{\gamma}. \notag 
    \end{equation}
\end{lemma}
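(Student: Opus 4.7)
The plan is to bound $|w_\gamma| = \Pr\bigl[\bigcap_{v \in V(\gamma)}\overline{E_v}\bigr]$ by exploiting a proper $\chi$-colouring of $G$ to split the intersection into $\chi$ pieces, each of which lives on an independent set and therefore factorises via the strong dependency property. Throughout, $\gamma$ denotes a polymer (a connected induced subgraph of $G$), and the only nontrivial task is to turn the single-vertex probability bound into the required product bound on $|w_\gamma|$.

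First, I would fix a proper colouring $c : V(G) \to \{1,\ldots,\chi\}$ of $G$ and, for each colour $i$, set $V_i \coloneqq c^{-1}(i) \cap V(\gamma)$. Since $c$ is a proper colouring of $G$, each $V_i$ is an independent set in $G$; in particular, for any $v \in V_i$, the singleton $\{v\}$ and the remainder $V_i \setminus \{v\}$ are disjoint and span no edges, so the strong dependency hypothesis applied iteratively to $\{v\}$ versus $V_i \setminus \{v\}$ yields that the events $\{E_v\}_{v \in V_i}$ are mutually independent. Hence
\begin{equation}
\Pr\Bigl[\bigcap_{v \in V_i}\overline{E_v}\Bigr] = \prod_{v \in V_i}\Pr[\overline{E_v}]. \notag
\end{equation}

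Next, using monotonicity of probability (since $V_i \subseteq V(\gamma)$ for each $i$), I would write $\Pr\bigl[\bigcap_{v \in V(\gamma)}\overline{E_v}\bigr] \leq \Pr\bigl[\bigcap_{v \in V_i}\overline{E_v}\bigr]$ for every colour $i$, and then multiply these $\chi$ inequalities together. Because the $V_i$ partition $V(\gamma)$, the right-hand side telescopes to $\prod_{v \in V(\gamma)}\Pr[\overline{E_v}]$, giving
\begin{equation}
\Pr\Bigl[\bigcap_{v \in V(\gamma)}\overline{E_v}\Bigr]^{\chi} \leq \prod_{v \in V(\gamma)}\Pr[\overline{E_v}]. \notag
\end{equation}
Taking $\chi$-th roots and inserting the hypothesis $\Pr[\overline{E_v}] \leq \left(\frac{1}{e^{1+\delta}(2\Delta+1)}\right)^{\chi}$ then gives the claimed bound $|w_\gamma| \leq \left(\frac{1}{e^{1+\delta}(2\Delta+1)}\right)^{|\gamma|}$.

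The only delicate point is the jump from pairwise to mutual independence inside each colour class, since the strong dependency graph definition is phrased only for pairs of subsets. This is handled by the iterative splitting above: at each step one peels off a single vertex and applies the hypothesis to $(\{v\},\, V_i \setminus \{v\})$, which is legitimate precisely because $V_i$ spans no edges. Everything else is a clean combination of De Morgan/inclusion–exclusion (already done in Lemma~\ref{lemma:ProbabilityIntersectionAbstractPolymerModel}) and a Hölder-style product inequality across colour classes.
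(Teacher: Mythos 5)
Your proof is correct and takes essentially the same route as the paper: decompose $V(\gamma)$ by a proper $\chi$-colouring, factorise within each colour class using independence guaranteed by the strong dependency graph, and combine across the $\chi$ classes to get $\Pr\bigl[\bigcap_{v\in V(\gamma)}\overline{E_v}\bigr] \leq \bigl(\prod_{v\in V(\gamma)}\Pr[\overline{E_v}]\bigr)^{1/\chi}$. The only difference is that the paper combines the colour classes via H\"older's inequality applied to the indicator functions, whereas you use monotonicity of probability and multiply $\chi$ inequalities; for indicator variables these yield the identical bound, so the difference is cosmetic.
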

\begin{proof}
    For any polymer $\gamma\in\mathcal{C}$, we have
    \begin{equation}
        \abs{w_\gamma} = \Pr\left[\bigcap_{v \in V(\gamma)}\overline{E_v}\right] = \mathbb{E}\left[\prod_{v \in V(\gamma)}\mathds{1}_{\overline{E_v}}\right], \notag
    \end{equation}
    where $\mathds{1}_E$ denotes the indicator function of the event $E$. Let $\{C_i\}_{i\in[\chi]}$ be a proper colouring of $G$. Note that this induces a proper colouring of $\gamma$. By applying H\"older's inequality with respect to the colouring classes, we obtain
    \begin{equation}
        \abs{w_\gamma} \leq \prod_{i=1}^\chi\left(\mathbb{E}\left[\prod_{v \in V(\gamma) \cap C_i}\mathds{1}_{\overline{E_v}}\right]\right)^\frac{1}{\chi} = \left(\prod_{v \in V(\gamma)}\Pr\left[\overline{E_v}\right]\right)^\frac{1}{\chi} \leq \left(\frac{1}{e^{1+\delta}(2\Delta+1)}\right)^\abs{\gamma}, \notag
    \end{equation}
    completing the proof.
\end{proof}

Combining Theorem~\ref{theorem:ApproximationAlgorithmAbstractPolymerModelPartitionFunction} with Lemma~\ref{lemma:ProbabilityIntersectionAbstractPolymerModel} and Lemma~\ref{lemma:ProbabilityIntersectionWeightBound} proves Theorem~\ref{theorem:ApproximationAlgorithmProbabilityIntersection}.

\subsection{Commuting Dimension of Intersection}
\label{section:CommutingDimensionOfIntersection}

In this section, we study the problem of approximating the dimension of intersection for commuting projectors. That is, given a set of pairwise commuting projectors on a Hilbert space, we aim to approximate the dimension of the intersection of their kernels. Computing the dimension of intersection for commuting projectors is at least as hard as computing the probability of intersection, and is therefore \#\textsc{P}-hard in general, and admits no fully polynomial-time randomised approximation scheme unless \textsc{RP} equals \textsc{NP}~\cite{dyer2004relative}. Nevertheless, we show that this problem admits an efficient approximation algorithm in a local lemma regime. As a corollary, we obtain an efficient algorithm for approximating the dimension of satisfying assignments of a commuting quantum satisfiability formula in a local lemma regime.

The quantum Lov\'asz local lemma~\cite{ambainis2012quantum, sattath2016local, he2019quantum} provides a sufficient condition under which the dimension of intersection for commuting projectors is positive. Constructive versions of the lemma establish efficient algorithms for finding a quantum state in the intersection for commuting projectors~\cite{arad2013constructive, schwarz2013information}.

Let $\mathcal{H}$ be a Hilbert space that is a tensor product of local spaces. We consider a set of pairwise commuting projectors $\{\Pi_v\}_{v \in V(G)}$ on a Hilbert space indexed by the vertices of a finite graph $G$. The graph $G$ is called the \emph{strong dependency graph} of the projectors $\{\Pi_v\}_{v \in V(G)}$ if there is an edge between vertices $u$ and $v$ if and only if the supports of the projectors $\Pi_u$ and $\Pi_v$ are not disjoint. Note that this notion of a dependency graph is stronger than the conventional definition used in the quantum Lov\'asz local lemma. We denote the dimension, kernel, rank, image, and trace by $\dim[\;\cdot\;]$, $\ker[\;\cdot\;]$, $\rank[\;\cdot\;]$, $\operatorname{im}[\;\cdot\;]$, and $\tr[\;\cdot\;]$, respectively, with all quantities normalised by the dimension of the subspace on which the operator acts non-trivially. For convenience, we represent vector subspaces as kernels of projection operators. We are interested in the dimension of intersection $\dim\left[\bigcap_{v \in V(G)}\ker\Pi_v\right]$ for commuting projectors. Our algorithmic result concerning the approximation of $\dim\left[\bigcap_{v \in V(G)}\ker\Pi_v\right]$ is as follows.

\begin{theorem}
    \label{theorem:ApproximationAlgorithmDimensionIntersectionCommuting}
    Fix $\delta>0$ and $\Delta\in\mathbb{Z}_{\geq2}$. Let $\mathcal{H}$ be a Hilbert space that is a tensor product of local spaces. Further let $\{\Pi_v\}_{v \in V(G)}$ be a set of pairwise commuting projectors on $\mathcal{H}$ with strong dependency graph $G$ of maximum degree $\Delta$ and chromatic number $\chi$. Suppose that, for all $v \in V(G)$, 
    \begin{equation}
        \rank[\Pi_v] \leq \left(\frac{1}{e^{1+\delta}(2\Delta+1)}\right)^\chi. \notag 
    \end{equation}
    Then the cluster expansion for $\log\left(\dim\left[\bigcap_{v \in V(G)}\ker\Pi_v\right]\right)$ converges absolutely and the dimension of $\bigcap_{v \in V(G)}\ker\Pi_v$ is positive. Further, if for all $U \subseteq V(G)$, $\dim\left[\bigcap_{v \in U}\operatorname{im}\Pi_v\right]$ can be computed in time $\exp(O(\abs{U}))$, then there is a fully polynomial-time approximation scheme for the dimension of $\bigcap_{v \in V(G)}\ker\Pi_v$.
\end{theorem}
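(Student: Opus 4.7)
The plan is to verify the hypotheses of Theorem~\ref{theorem:ApproximationAlgorithmAbstractPolymerModelPartitionFunction} by mirroring the argument used for the probability of intersection. The main conceptual observation is that pairwise commuting projectors on a tensor product space are simultaneously diagonalisable in a product basis, so the normalised dimension of a joint image is literally a uniform-measure probability on the joint eigenbasis. Consequently the entire argument reduces to a classical counting problem, and each step has a direct analog in Lemmas~\ref{lemma:ProbabilityIntersectionAbstractPolymerModel} and~\ref{lemma:ProbabilityIntersectionWeightBound}.

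For the polymer model representation, I would use that for commuting projectors $\prod_{v \in V(G)}(I-\Pi_v)$ is itself the projector onto $\bigcap_{v \in V(G)}\ker\Pi_v$. Expanding this product and taking normalised traces yields the inclusion-exclusion identity
\begin{equation}
    \dim\left[\bigcap_{v \in V(G)}\ker\Pi_v\right] = \sum_{S \subseteq V(G)}(-1)^{\abs{S}}\dim\left[\bigcap_{v \in S}\operatorname{im}\Pi_v\right]. \notag
\end{equation}
Because the strong dependency graph has an edge exactly when two projectors share a local tensor factor, vertices lying in distinct connected components of $G[S]$ correspond to projectors on disjoint supports. Hence $\prod_{v \in S}\Pi_v$ factorises as a tensor product across the connected components of $G[S]$ and its normalised trace factorises in the same way. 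Grouping the sum by these components gives the polymer representation with weights $w_\gamma \coloneqq (-1)^{\abs{\gamma}}\dim\left[\bigcap_{v \in V(\gamma)}\operatorname{im}\Pi_v\right]$, exactly analogous to Lemma~\ref{lemma:ProbabilityIntersectionAbstractPolymerModel}.

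For the weight bound, simultaneous diagonalisation turns each $\Pi_v$ into the indicator of an event in the joint eigenbasis, and two such indicators corresponding to projectors with disjoint supports are independent under the uniform measure on that basis. Fixing a proper $\chi$-colouring $\{C_i\}_{i\in[\chi]}$ of $G$, I would apply the classical H\"older inequality across colour classes exactly as in Lemma~\ref{lemma:ProbabilityIntersectionWeightBound} and then use the within-class disjointness of supports to factorise each colour-class contribution into a product of normalised ranks, obtaining
\begin{equation}
    \abs{w_\gamma} \leq \left(\prod_{v \in V(\gamma)}\rank[\Pi_v]\right)^{1/\chi} \leq \left(\frac{1}{e^{1+\delta}(2\Delta+1)}\right)^{\abs{\gamma}}. \notag
\end{equation}
Combining this with the assumed $\exp(O(\abs{U}))$-time computability of $\dim\left[\bigcap_{v \in U}\operatorname{im}\Pi_v\right]$ supplies all hypotheses of Theorem~\ref{theorem:ApproximationAlgorithmAbstractPolymerModelPartitionFunction}. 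The principal care required is normalisation bookkeeping: the normalised joint-image dimensions must be compatible with the normalised single-operator ranks so that factorisation across disjoint supports produces precisely a product of normalised ranks. This is more a potential pitfall than a genuine obstacle, and once handled the argument reduces cleanly to the classical one.
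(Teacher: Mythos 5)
Your proposal is correct and follows essentially the same route as the paper: the identical polymer representation obtained by expanding $\prod_{v}(\mathbb{I}-\Pi_v)$ and factorising over connected components of $G[S]$, and the same H\"older-over-colour-classes weight bound (the paper phrases it as Schatten-norm H\"older applied to $\frac{1}{d^{\abs{\operatorname{supp}(\gamma)}}}\norm{\prod_{v\in V(\gamma)}\Pi_v}_1$ rather than classical H\"older after simultaneous diagonalisation, which for commuting operators amounts to the same inequality). One minor caveat: pairwise commuting projectors need not be simultaneously diagonalisable in a \emph{product} basis, but your argument never actually needs this --- it only uses simultaneous diagonalisability together with the factorisation of normalised traces across disjoint tensor factors, both of which hold.
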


We apply Theorem~\ref{theorem:ApproximationAlgorithmDimensionIntersectionCommuting} to the problem of computing the dimension of the satisfying subspace of a commuting quantum $k$-satisfiability formula. We consider a set of pairwise commuting $k$-local projectors $\{\Pi_v\}_{v \in V(G)}$ on a Hilbert space that is a tensor product of $d$-dimensional local spaces. We are interested in the dimension of intersection $\dim\left[\bigcap_{v \in V(G)}\ker\Pi_v\right]$, which is exactly the dimension of the satisfying subspace. Our algorithmic result concerning the approximation of $\dim\left[\bigcap_{v \in V(G)}\ker\Pi_v\right]$ is as follows.

\begin{corollary}
    Fix $\delta>0$ and $d,k,\Delta\in\mathbb{Z}_{\geq2}$. Let $\mathcal{H}$ be a Hilbert space that is a tensor product of $d$-dimensional local spaces. Further let $\{\Pi_v\}_{v \in V(G)}$ be a set of pairwise commuting $k$-local projectors on $\mathcal{H}$ with strong dependency graph $G$ of maximum degree $\Delta$ and chromatic number $\chi$. Suppose that, for all $v \in V(G)$, 
    \begin{equation}
        \rank[\Pi_v] \leq \left(\frac{1}{e^{1+\delta}(2\Delta+1)}\right)^\chi. \notag 
    \end{equation}
    Then $\{\Pi_v\}_{v \in V(G)}$ is satisfiable and there is a fully polynomial-time approximation scheme for the dimension of $\bigcap_{v \in V(G)}\ker\Pi_v$.
\end{corollary}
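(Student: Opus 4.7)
The plan is to derive the corollary as a direct application of Theorem~\ref{theorem:ApproximationAlgorithmDimensionIntersectionCommuting}. The hypothesis on $\rank[\Pi_v]$ is identical in both statements, so the only non-trivial task is to verify the algorithmic side condition: namely, that for every $U\subseteq V(G)$ the quantity $\dim\left[\bigcap_{v\in U}\operatorname{im}\Pi_v\right]$ can be computed in time $\exp(O(\abs{U}))$.

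First, I would exploit commutativity to express the intersection of images as the image of a single projector. Since the $\Pi_v$ pairwise commute, the product $\Pi_U \coloneqq \prod_{v\in U}\Pi_v$ is itself a projector, and $\operatorname{im}\Pi_U = \bigcap_{v\in U}\operatorname{im}\Pi_v$. Thus the required normalised dimension equals the normalised trace of $\Pi_U$, computed relative to the subspace on which the family acts non-trivially.

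Next, I would use $k$-locality to bound the relevant Hilbert space dimension. Each $\Pi_v$ acts non-trivially on at most $k$ of the $d$-dimensional local spaces, so the collection $\{\Pi_v\}_{v\in U}$ acts non-trivially on at most $k\abs{U}$ qudits, giving an effective Hilbert space of dimension at most $d^{k\abs{U}}$. With $k$ and $d$ constant, one can represent each $\Pi_v$ as a matrix of size $d^{k\abs{U}}\times d^{k\abs{U}}$, multiply them out, and take the trace, all in time $\exp(O(\abs{U}))$. This verifies the computability hypothesis.

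With both hypotheses of Theorem~\ref{theorem:ApproximationAlgorithmDimensionIntersectionCommuting} now in hand, invoking it yields both the positivity of $\dim\left[\bigcap_{v\in V(G)}\ker\Pi_v\right]$ (i.e.\ satisfiability of $\{\Pi_v\}_{v\in V(G)}$) and a fully polynomial-time approximation scheme for this dimension. There is no real obstacle here; the corollary is essentially a specialisation in which $k$-locality guarantees that local intersections are computable in singly exponential time in $\abs{U}$, which is precisely what Theorem~\ref{theorem:ApproximationAlgorithmDimensionIntersectionCommuting} requires.
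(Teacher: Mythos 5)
Your proposal is correct and matches the paper's argument: the corollary follows by verifying the $\exp(O(\abs{U}))$-time computability of $\dim\left[\bigcap_{v\in U}\operatorname{im}\Pi_v\right]$ via $k$-locality (restricting to the at most $d^{k\abs{U}}$-dimensional support) and then invoking Theorem~\ref{theorem:ApproximationAlgorithmDimensionIntersectionCommuting}. The paper phrases the linear-algebra step as diagonalising the support rather than taking the trace of the product of commuting projectors, but this is an immaterial difference.
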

\begin{proof}
    For any $U \subseteq V(G)$, $\dim\left[\bigcap_{v \in U}\operatorname{im}\Pi_v\right]$ can be computed in time $\exp(O(\abs{U}))$ by diagonalising the $\exp(O(\abs{U}))$-dimensional support. The proof then follows from Theorem~\ref{theorem:ApproximationAlgorithmDimensionIntersectionCommuting}.
\end{proof}

We prove Theorem~\ref{theorem:ApproximationAlgorithmDimensionIntersectionCommuting} by showing that the conditions required to apply Theorem~\ref{theorem:ApproximationAlgorithmAbstractPolymerModelPartitionFunction} are satisfied. That is, we show that (1) the dimension of intersection for commuting projectors $\dim\left[\bigcap_{v \in V(G)}\ker\Pi_v\right]$ admits a suitable abstract polymer model representation, and (2) the polymer weights satisfy the desired bound.

\begin{lemma}
    \label{lemma:DimensionIntersectionCommutingAbstractPolymerModel}
    The dimension of intersection for commuting projectors admits the following abstract polymer model representation.
    \begin{equation}
        \dim\left[\bigcap_{v \in V(G)}\ker\Pi_v\right] = \sum_{\Gamma\in\mathcal{G}}\prod_{\gamma\in\Gamma}w_\gamma, \notag
    \end{equation}
    where
    \begin{equation}
        w_\gamma \coloneqq (-1)^\abs{\gamma}\dim\left[\bigcap_{v \in V(\gamma)}\operatorname{im}\Pi_v\right]. \notag
    \end{equation}
\end{lemma}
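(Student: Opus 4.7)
The plan is to mirror the classical inclusion-exclusion proof of Lemma~\ref{lemma:ProbabilityIntersectionAbstractPolymerModel}, replacing probabilities of complementary events with normalised dimensions of images and using the commutativity of the projectors to justify both the inclusion-exclusion expansion and the tensor factorisation across connected components of the strong dependency graph.

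First, since the projectors $\{\Pi_v\}_{v\in V(G)}$ pairwise commute, so do the complementary projectors $\{I-\Pi_v\}_{v\in V(G)}$, and
\begin{equation}
    \bigcap_{v\in V(G)}\ker\Pi_v = \operatorname{im}\prod_{v\in V(G)}(I-\Pi_v). \notag
\end{equation}
Taking normalised dimensions and expanding the product,
\begin{equation}
    \dim\left[\bigcap_{v\in V(G)}\ker\Pi_v\right] = \tr\prod_{v\in V(G)}(I-\Pi_v) = \sum_{S\subseteq V(G)}(-1)^{\abs{S}}\tr\prod_{v\in S}\Pi_v. \notag
\end{equation}
Commutativity also gives that $\prod_{v\in S}\Pi_v$ is the orthogonal projector onto $\bigcap_{v\in S}\operatorname{im}\Pi_v$, so that $\tr\prod_{v\in S}\Pi_v=\dim\left[\bigcap_{v\in S}\operatorname{im}\Pi_v\right]$.

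Next, I would exploit the definition of the strong dependency graph to factorise across the connected components $\Gamma_S$ of $G[S]$. By hypothesis, if $\gamma,\gamma'\in\Gamma_S$ are distinct components, then no edge of $G$ joins $V(\gamma)$ to $V(\gamma')$, so the supports of the corresponding projectors are disjoint and the projectors act non-trivially on disjoint tensor factors of $\mathcal{H}$. Consequently,
\begin{equation}
    \dim\left[\bigcap_{v\in S}\operatorname{im}\Pi_v\right] = \prod_{\gamma\in\Gamma_S}\dim\left[\bigcap_{v\in V(\gamma)}\operatorname{im}\Pi_v\right], \notag
\end{equation}
where the product is in the normalised sense (the dimensions on tensor-disjoint supports multiply). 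Combined with $(-1)^{\abs{S}}=\prod_{\gamma\in\Gamma_S}(-1)^{\abs{\gamma}}$, this rewrites each summand as a product of polymer weights $w_\gamma$.

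Finally, I would reindex the sum over $S\subseteq V(G)$ by the collection of components $\Gamma_S$. By definition of the compatibility relation used in Theorem~\ref{theorem:ApproximationAlgorithmAbstractPolymerModelPartitionFunction}, a family $\Gamma$ of connected induced subgraphs of $G$ arises as $\Gamma_S$ for some $S$ precisely when the elements of $\Gamma$ are pairwise compatible, i.e.\ $\Gamma\in\mathcal{G}$. This yields
\begin{equation}
    \dim\left[\bigcap_{v\in V(G)}\ker\Pi_v\right] = \sum_{\Gamma\in\mathcal{G}}\prod_{\gamma\in\Gamma}w_\gamma, \notag
\end{equation}
as required. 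The only subtle point, and the main thing to verify carefully, is the factorisation step: one must check that the normalisation conventions make the tensor-disjoint dimensions multiply as stated, and that commutativity is genuinely used both to identify $\prod_{v\in S}\Pi_v$ with the projector onto the intersection of images and to justify writing the kernel of the intersection as the image of the product of complements.
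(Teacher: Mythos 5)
Your proposal is correct and follows essentially the same route as the paper's proof: use commutativity to identify $\prod_{v\in S}\Pi_v$ with the projector onto $\bigcap_{v\in S}\operatorname{im}\Pi_v$, expand the trace of $\prod_{v}(\mathbb{I}-\Pi_v)$ by inclusion-exclusion, and factorise over the connected components of $G[S]$ before reindexing by admissible polymer sets. Your explicit justification of the factorisation step via tensor-disjoint supports and the normalisation convention is a detail the paper leaves implicit, but it is the same argument.
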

\begin{proof}
     Since $\{\Pi_v\}_{v \in V(G)}$ are pairwise commuting projectors, the product $\prod_{v \in V(G)}(\mathbb{I}-\Pi_v)$ is the projector onto $\bigcap_{v \in V(G)}\ker\Pi_v$. Similarly, for any subset $S \subseteq V(G)$, the product $\prod_{v\in S}\Pi_v$ is the projector onto $\bigcap_{v \in S}\operatorname{im}\Pi_v$. Hence,
    \begin{align}
        \dim\left[\bigcap_{v \in V(G)}\ker\Pi_v\right] &= \tr\left[\prod_{v \in V(G)}(\mathbb{I}-\Pi_v)\right] \notag \\ 
        &= \sum_{S \subseteq V(G)}(-1)^\abs{S}\tr\left[\prod_{v \in S}\Pi_v\right] \notag \\ 
        &= \sum_{S \subseteq V(G)}(-1)^\abs{S}\dim\left[\bigcap_{v \in S}\operatorname{im}\Pi_v\right]. \notag
    \end{align}
    For a subset $S \subseteq V(G)$, let $\Gamma_S$ denote the maximally connected components of the induced subgraph $G[S]$. By factorising over these components, we have
    \begin{align}
        \dim\left[\bigcap_{v \in V(G)}\ker\Pi_v\right] &= \sum_{S \subseteq V}\prod_{\gamma\in\Gamma_S}(-1)^\abs{\gamma}\dim\left[\bigcap_{v \in V(\gamma)}\operatorname{im}\Pi_v\right] \notag \\
        &= \sum_{\Gamma\in\mathcal{G}}\prod_{\gamma\in\Gamma}w_\gamma. \notag
    \end{align}
    This completes the proof.
\end{proof}

\begin{lemma}
    \label{lemma:DimensionIntersectionCommutingWeightBound}
    Fix $\delta>0$ and $\Delta\in\mathbb{Z}_{\geq2}$. Let $\mathcal{H}$ be a Hilbert space that is a tensor product of local spaces. Further let $\{\Pi_v\}_{v \in V(G)}$ be a set of pairwise commuting projectors on $\mathcal{H}$ with strong dependency graph $G$ of maximum degree $\Delta$ and chromatic number $\chi$. Suppose that, for all $v \in V(G)$, 
    \begin{equation}
        \rank[\Pi_v] \leq \left(\frac{1}{e^{1+\delta}(2\Delta+1)}\right)^\chi. \notag 
    \end{equation}
    Then, for all polymers $\gamma\in\mathcal{C}$, the weight $w_\gamma$ satisfies
    \begin{equation}
        \abs{w_\gamma} \leq \left(\frac{1}{e^{1+\delta}(2\Delta+1)}\right)^\abs{\gamma}. \notag 
    \end{equation}
\end{lemma}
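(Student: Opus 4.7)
The plan is to follow the same Hölder-inequality strategy used in the proof of Lemma~\ref{lemma:ProbabilityIntersectionWeightBound}, with the classical Hölder inequality replaced by the matrix (Schatten-norm) Hölder inequality. The starting point is the observation that, because the projectors $\{\Pi_v\}_{v \in V(\gamma)}$ commute, the product $\prod_{v \in V(\gamma)}\Pi_v$ is itself the orthogonal projector onto $\bigcap_{v \in V(\gamma)}\operatorname{im}\Pi_v$, and therefore $|w_\gamma| = \dim\left[\bigcap_{v \in V(\gamma)}\operatorname{im}\Pi_v\right] = \tr\left[\prod_{v \in V(\gamma)}\Pi_v\right]$ in the paper's normalised convention.

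Next I would fix a proper $\chi$-colouring $\{C_i\}_{i \in [\chi]}$ of $G$, which restricts to a proper $\chi$-colouring of $\gamma$, and form $A_i \coloneqq \prod_{v \in V(\gamma)\cap C_i}\Pi_v$ for each colour class $i$. The key structural input is that within a colour class no two vertices are adjacent in $G$, so by the definition of the strong dependency graph the projectors $\{\Pi_v : v \in V(\gamma)\cap C_i\}$ act on pairwise disjoint sets of local tensor factors. Hence each $A_i$ is itself a projector, and its normalised rank factorises as $\rank[A_i] = \prod_{v \in V(\gamma)\cap C_i}\rank[\Pi_v]$.

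The main step is then the Schatten $\chi$-norm Hölder inequality, which gives $\left|\tr[A_1 \cdots A_\chi]\right| \leq \prod_{i=1}^\chi\left(\tr\left[|A_i|^\chi\right]\right)^{1/\chi}$. Since each $A_i$ is a projector, $|A_i|^\chi = A_i$, so the right-hand side collapses to $\prod_{i=1}^\chi\rank[A_i]^{1/\chi}$. Substituting the disjoint-support factorisation from the previous step and invoking the hypothesis on $\rank[\Pi_v]$ yields
\[
    |w_\gamma| \leq \left(\prod_{v \in V(\gamma)}\rank[\Pi_v]\right)^{1/\chi} \leq \left(\frac{1}{e^{1+\delta}(2\Delta+1)}\right)^{|\gamma|},
\]
where the final exponent arises from $\chi \cdot |\gamma|/\chi = |\gamma|$, exactly mirroring the classical calculation.

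The main obstacle I anticipate is the bookkeeping between the standard Schatten-Hölder inequality and the paper's normalisation convention: one must check that the powers of $\dim\mathcal{H}$ introduced by normalising the trace and the rank cancel cleanly, which they do precisely because the Hölder exponents $1/\chi$ sum to one. Verifying that the disjoint-support identity $\rank[A_i] = \prod_v\rank[\Pi_v]$ is consistent with the same normalisation is the only other subtle point, and it follows from $\rank[P_1 \otimes P_2]/\dim = (\rank[P_1]/\dim_1)(\rank[P_2]/\dim_2)$ applied inductively within each colour class. Once these accounting details are settled, the argument is a direct quantum transcription of the proof of Lemma~\ref{lemma:ProbabilityIntersectionWeightBound}.
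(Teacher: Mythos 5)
Your proposal is correct and follows essentially the same route as the paper: the paper likewise writes $\abs{w_\gamma}$ as a normalised trace norm of $\prod_{v\in V(\gamma)}\Pi_v$ and applies the Schatten--H\"older inequality across the colour classes of a proper $\chi$-colouring, using the disjointness of supports within each class to reduce each factor to a product of single-projector ranks. Your extra remarks on the normalisation bookkeeping are consistent with what the paper leaves implicit.
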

\begin{proof}
    For any polymer $\gamma\in\mathcal{C}$, we have
    \begin{equation}
        \abs{w_\gamma} = \frac{1}{d^\abs{\operatorname{supp}(\gamma)}}\norm{\prod_{v \in V(\gamma)}\Pi_v}_1, \notag
    \end{equation}
    where $\operatorname{supp}(\gamma)\coloneqq\bigcup_{v \in V(\gamma)}\operatorname{supp}(\Pi_v)$ and $\norm{\;\cdot\;}_p$ denotes the Schatten $p$-norm. Let $\{C_i\}_{i\in[\chi]}$ be a proper colouring of $G$. Note that this induces a proper colouring of $\gamma$. By applying H\"older's inequality with respect to the colouring classes, we obtain
    \begin{equation}
        \abs{w_\gamma} \leq \frac{1}{d^\abs{\operatorname{supp}(\gamma)}}\prod_{i=1}^\chi\norm{\prod_{v \in V(\gamma) \cap C_i}\Pi_v}_\chi \leq \left(\prod_{v \in V(\gamma)}\rank[\Pi_v]\right)^\frac{1}{\chi} \leq \left(\frac{1}{e^{1+\delta}(2\Delta+1)}\right)^\abs{\gamma}, \notag
    \end{equation}
    completing the proof.
\end{proof}

Combining Theorem~\ref{theorem:ApproximationAlgorithmAbstractPolymerModelPartitionFunction} with Lemma~\ref{lemma:DimensionIntersectionCommutingAbstractPolymerModel} and Lemma~\ref{lemma:DimensionIntersectionCommutingWeightBound} proves Theorem~\ref{theorem:ApproximationAlgorithmDimensionIntersectionCommuting}.

\subsection{General Dimension of Intersection}
\label{section:GeneralDimensionOfIntersection}

In this section, we study the problem of approximating the dimension of intersection for general projectors. That is, given a set of projectors on a Hilbert space, we aim to approximate the dimension of the intersection of their kernels. From a physics perspective, the dimension of intersection of projector kernels corresponds to the \emph{ground state degeneracy} of the associated Hamiltonian. Computing the dimension of intersection for general projectors is at least as hard as for commuting projectors, and is therefore \#\textsc{P}-hard in general, and admits no fully polynomial-time randomised approximation scheme unless \textsc{RP} equals \textsc{NP}~\cite{dyer2004relative}. Nevertheless, we show that this problem admits an efficient approximation algorithm in a local lemma regime. However, our approach requires a global inclusion-exclusion stability condition, since the standard inclusion-exclusion formula no longer holds for general projectors. As a corollary, we obtain an efficient algorithm for approximating the dimension of satisfying assignments of a quantum satisfiability formula in a local lemma regime. Further, we show that the global stability requirement can be removed under the assumption of a spectral gap, yielding an affine approximation to the dimension of intersection.

The quantum Lov\'asz local lemma~\cite{ambainis2012quantum, sattath2016local, he2019quantum} provides a sufficient condition under which the dimension of intersection for general projectors is positive. A constructive version of the lemma has established an efficient algorithm for finding a quantum state in the intersection for general projectors under the assumption of a uniform spectral gap~\cite{gilyen2017preparing}.

Let $\mathcal{H}$ be a Hilbert space that is a tensor product of local spaces. We consider a set of projectors $\{\Pi_v\}_{v \in V(G)}$ on a Hilbert space indexed by the vertices of a finite graph $G$. As in the commuting setting, the graph $G$ is called the \emph{strong dependency graph} of the projectors $\{\Pi_v\}_{v \in V(G)}$ if there is an edge between vertices $u$ and $v$ if and only if the supports of the projectors $\Pi_u$ and $\Pi_v$ are not disjoint. We retain the same conventions as in the commuting case for dimension, kernel, rank, image, and trace, and denote the codimension by $\operatorname{codim}[\;\cdot\;]\coloneqq1-\dim[\;\cdot\;]$. We are interested in the dimension of intersection $\dim\left[\bigcap_{v \in V(G)}\ker\Pi_v\right]$ for general projectors. Our algorithmic result concerning the approximation of $\dim\left[\bigcap_{v \in V(G)}\ker\Pi_v\right]$ is as follows.

\begin{theorem}
    \label{theorem:ApproximationAlgorithmDimensionIntersectionGeneral}
    Fix $\delta>0$ and $\Delta\in\mathbb{Z}_{\geq2}$. Let $\mathcal{H}$ be a Hilbert space that is a tensor product of local spaces. Further let $\{\Pi_v\}_{v \in V(G)}$ be a set of projectors on $\mathcal{H}$ with strong dependency graph $G$ of maximum degree $\Delta$. Suppose that, for all $U \subseteq V(G)$, 
    \begin{equation}
        \abs{\sum_{S \subseteq U}(-1)^\abs{S}\dim\left[\bigcap_{v \in S}\ker\Pi_v\right]} \leq \left(\frac{1}{e^{1+\delta}(2\Delta+1)}\right)^\abs{U}. \notag 
    \end{equation}
    Then the cluster expansion for $\log\left(\dim\left[\bigcap_{v \in V(G)}\ker\Pi_v\right]\right)$ converges absolutely and the dimension of $\bigcap_{v \in V(G)}\ker\Pi_v$ is positive. Further, if for all $U \subseteq V(G)$, $\dim\left[\bigcap_{v \in U}\ker\Pi_v\right]$ can be computed in time $\exp(O(\abs{U}))$, then there is a fully polynomial-time approximation scheme for the dimension of $\bigcap_{v \in V(G)}\ker\Pi_v$.
\end{theorem}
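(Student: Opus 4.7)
The plan is to follow the same three-step recipe used in Theorems~\ref{theorem:ApproximationAlgorithmProbabilityIntersection} and~\ref{theorem:ApproximationAlgorithmDimensionIntersectionCommuting}: derive an abstract polymer model representation for $\dim\left[\bigcap_{v \in V(G)}\ker\Pi_v\right]$, verify the polymer weight bound, and invoke Theorem~\ref{theorem:ApproximationAlgorithmAbstractPolymerModelPartitionFunction}. The departure from the commuting case is that the operator identity $\prod_{v}(\mathbb{I}-\Pi_v)=\sum_{S}(-1)^{\abs{S}}\prod_{v\in S}\Pi_v$ is no longer available, so the inclusion-exclusion must be carried out directly at the level of kernel dimensions rather than lifted through products of projectors.

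For the polymer representation, define $f(S)\coloneqq\dim\left[\bigcap_{v\in S}\ker\Pi_v\right]$ with $f(\varnothing)=1$, and let $g(U)\coloneqq\sum_{S\subseteq U}(-1)^{\abs{S}}f(S)$. M\"obius inversion on the Boolean lattice gives
\begin{equation}
    \dim\left[\bigcap_{v \in V(G)}\ker\Pi_v\right] = f(V(G)) = \sum_{U\subseteq V(G)}(-1)^{\abs{U}}g(U). \notag
\end{equation}
The key observation is that $g$ factorises over the connected components of $G[U]$: if $u,v \in U$ lie in distinct components of $G[U]$, then $\{u,v\}\notin E(G)$, and so by the strong dependency property the projectors $\Pi_u$ and $\Pi_v$ act on disjoint tensor factors of $\mathcal{H}$. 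Consequently $f$ factorises as a product of $f$-values on the components, and hence so does $g$. Defining $w_\gamma\coloneqq(-1)^{\abs{\gamma}}g(V(\gamma))$ for each connected polymer $\gamma$ and grouping subsets of $V(G)$ by their induced component structure yields
\begin{equation}
    \dim\left[\bigcap_{v \in V(G)}\ker\Pi_v\right] = \sum_{\Gamma\in\mathcal{G}}\prod_{\gamma\in\Gamma}w_\gamma. \notag
\end{equation}

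The hypothesis of the theorem translates directly into the weight bound $\abs{w_\gamma}=\abs{g(V(\gamma))}\leq\left(e^{1+\delta}(2\Delta+1)\right)^{-\abs{\gamma}}$. Each polymer weight is computable in time $\exp(O(\abs{\gamma}))$ by summing over the $2^{\abs{\gamma}}$ subsets $S\subseteq V(\gamma)$ and invoking the assumed computability of $\dim\left[\bigcap_{v\in S}\ker\Pi_v\right]$. All hypotheses of Theorem~\ref{theorem:ApproximationAlgorithmAbstractPolymerModelPartitionFunction} are thus satisfied, giving absolute convergence of the cluster expansion and non-vanishing of $Z(\mathcal{C},w)$. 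Since $Z(\mathcal{C},w)$ equals the (non-negative) dimension of intersection, non-vanishing forces positivity, and the FPTAS for $Z(\mathcal{C},w)$ furnishes the FPTAS for $\dim\left[\bigcap_{v \in V(G)}\ker\Pi_v\right]$.

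The main conceptual obstacle is justifying the factorisation of $g$ over connected components without appealing to commutativity. The argument sidesteps this by performing the inclusion-exclusion on scalars rather than on operators: the only structural input required is that disjoint-support projectors yield multiplicative kernel dimensions, which is automatic from the tensor-product structure of $\mathcal{H}$ together with the strong dependency graph. The role of the global stability hypothesis is then to control the polymer weights directly, playing the part that was served in the commuting case by the rank bound on individual projectors combined with H\"older's inequality.
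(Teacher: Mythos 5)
Your proposal is correct and follows essentially the same route as the paper: the paper's Lemma~\ref{lemma:DimensionIntersectionGeneralAbstractPolymerModel} is exactly your polymer representation with $w_\gamma=(-1)^{\abs{\gamma}}\sum_{T\subseteq V(\gamma)}(-1)^{\abs{T}}\dim\left[\bigcap_{v\in T}\ker\Pi_v\right]$, obtained by the same M\"obius-inversion identity (cited there as the principle of inclusion-exclusion) followed by factorisation over the connected components of $G[S]$ using disjointness of supports of non-adjacent projectors. The weight bound is likewise read off directly from the global stability hypothesis, and the remainder is the same invocation of Theorem~\ref{theorem:ApproximationAlgorithmAbstractPolymerModelPartitionFunction}.
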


We apply Theorem~\ref{theorem:ApproximationAlgorithmDimensionIntersectionGeneral} to the problem of computing the dimension of the satisfying subspace of a quantum $k$-satisfiability formula. We consider a set of $k$-local projectors $\{\Pi_v\}_{v \in V(G)}$ on a Hilbert space that is a tensor product of $d$-dimensional local spaces. We are interested in the dimension of intersection $\dim\left[\bigcap_{v \in V(G)}\ker\Pi_v\right]$, which is exactly the dimension of the satisfying subspace. Our algorithmic result concerning the approximation of $\dim\left[\bigcap_{v \in V(G)}\ker\Pi_v\right]$ is as follows.

\begin{corollary}
    Fix $\delta>0$ and $d,k,\Delta\in\mathbb{Z}_{\geq2}$. Let $\mathcal{H}$ be a Hilbert space that is a tensor product of $d$-dimensional local spaces. Further let $\{\Pi_v\}_{v \in V(G)}$ be a set of $k$-local projectors on $\mathcal{H}$ with strong dependency graph $G$ of maximum degree $\Delta$. Suppose that, for all $U \subseteq V(G)$, 
    \begin{equation}
        \abs{\sum_{S \subseteq U}(-1)^\abs{S}\dim\left[\bigcap_{v \in S}\ker\Pi_v\right]} \leq \left(\frac{1}{e^{1+\delta}(2\Delta+1)}\right)^\abs{U}. \notag 
    \end{equation}
    Then $\{\Pi_v\}_{v \in V(G)}$ is satisfiable and there is a fully polynomial-time approximation scheme for the dimension of $\bigcap_{v \in V(G)}\ker\Pi_v$.
\end{corollary}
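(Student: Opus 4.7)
The plan is to derive the corollary as a direct application of Theorem~\ref{theorem:ApproximationAlgorithmDimensionIntersectionGeneral}. The global inclusion-exclusion stability condition appearing in that theorem is identical to the hypothesis of the corollary, so the only remaining thing I need to verify is that for every $U \subseteq V(G)$, the per-subset dimension $\dim\left[\bigcap_{v \in U}\ker\Pi_v\right]$ is computable in time $\exp(O(\abs{U}))$.

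To establish this I would exploit the $k$-locality of the projectors together with the fixed local dimension $d$. For any $U \subseteq V(G)$, the union of the supports of $\{\Pi_v\}_{v \in U}$ touches at most $k\abs{U}$ qudits, so the restricted Hilbert space on this joint support has dimension $d^{k\abs{U}}=\exp(O(\abs{U}))$ since $d$ and $k$ are fixed constants in the hypothesis. On this subspace each $\Pi_v$ is represented by a matrix of size $\exp(O(\abs{U})) \times \exp(O(\abs{U}))$, and because each $\Pi_v$ is positive semidefinite the kernel of the sum $\sum_{v \in U}\Pi_v$ coincides with $\bigcap_{v \in U}\ker\Pi_v$. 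Computing this kernel by standard linear algebra (e.g.\ Gaussian elimination or diagonalisation) takes time polynomial in the matrix size, hence $\exp(O(\abs{U}))$. The normalisation convention, in which dimensions are divided by the dimension of the subspace on which the operator acts non-trivially, then yields the desired quantity on the full Hilbert space.

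With both hypotheses of Theorem~\ref{theorem:ApproximationAlgorithmDimensionIntersectionGeneral} in place, satisfiability of $\{\Pi_v\}_{v \in V(G)}$ and the existence of a fully polynomial-time approximation scheme for $\dim\left[\bigcap_{v \in V(G)}\ker\Pi_v\right]$ follow immediately. I do not expect a substantive obstacle here: the argument is the quantum analogue of the $k$-CNF corollary in Section~\ref{section:ProbabilityOfIntersection}, with the $k$-local reduction to a bounded-dimensional support playing the role that direct enumeration of truth assignments on the finitely many involved variables played in the classical setting. The real work is carried entirely by Theorem~\ref{theorem:ApproximationAlgorithmDimensionIntersectionGeneral}.
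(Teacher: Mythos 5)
Your proposal is correct and follows the same route as the paper: the stability hypothesis is passed directly to Theorem~\ref{theorem:ApproximationAlgorithmDimensionIntersectionGeneral}, and the only thing to check is that $\dim\left[\bigcap_{v \in U}\ker\Pi_v\right]$ is computable in time $\exp(O(\abs{U}))$ by linear algebra on the $d^{k\abs{U}}$-dimensional joint support. Your added detail (identifying $\bigcap_{v \in U}\ker\Pi_v$ with $\ker\left[\sum_{v \in U}\Pi_v\right]$ for positive semidefinite $\Pi_v$) is a harmless elaboration of the paper's one-line argument of diagonalising the support.
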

\begin{proof}
    For any $U \subseteq V(G)$, $\dim\left[\bigcap_{v \in U}\ker\Pi_v\right]$ can be computed in time $\exp(O(\abs{U}))$ by diagonalising the $\exp(O(\abs{U}))$-dimensional support. The proof then follows from Theorem~\ref{theorem:ApproximationAlgorithmDimensionIntersectionGeneral}.
\end{proof}

In the case $d=2$ and $k=2$, a complete characterisation of the satisfying subspace in the quantum Lov\'asz local lemma regime is known, via an explicit description as the span of tree tensor network states~\cite{ji2011complete}. However, computing the dimension of this subspace is \#\textsc{P}-complete~\cite{ji2011complete}.

We prove Theorem~\ref{theorem:ApproximationAlgorithmDimensionIntersectionGeneral} by showing that the conditions required to apply Theorem~\ref{theorem:ApproximationAlgorithmAbstractPolymerModelPartitionFunction} are satisfied. That is, we show that the dimension of intersection for general projectors $\dim\left[\bigcap_{v \in V(G)}\ker\Pi_v\right]$ admits a suitable abstract polymer model representation.

\begin{lemma}
    \label{lemma:DimensionIntersectionGeneralAbstractPolymerModel}
    The dimension of intersection for general projectors admits the following abstract polymer model representation.
    \begin{equation}
        \dim\left[\bigcap_{v \in V(G)}\ker\Pi_v\right] = \sum_{\Gamma\in\mathcal{G}}\prod_{\gamma\in\Gamma}w_\gamma, \notag
    \end{equation}
    where
    \begin{equation}
        w_\gamma \coloneqq (-1)^\abs{\gamma}\sum_{T \subseteq V(\gamma)}(-1)^\abs{T}\dim\left[\bigcap_{v \in T}\ker\Pi_v\right]. \notag
    \end{equation}
\end{lemma}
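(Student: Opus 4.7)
The plan is to adapt the commuting-case argument, replacing the projector-product representation (which is unavailable for non-commuting $\Pi_v$) with a Möbius inversion argument followed by a multiplicative factorisation across connected components. Define $f(U) \coloneqq \dim\left[\bigcap_{v \in U}\ker\Pi_v\right]$ for $U \subseteq V(G)$, with the convention $f(\varnothing)=1$, and set $g(U) \coloneqq \sum_{T \subseteq U}(-1)^{\abs{T}}f(T)$, so that $w_\gamma = (-1)^{\abs{\gamma}} g(V(\gamma))$.

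First I would apply Möbius inversion on the Boolean lattice $(2^{V(G)}, \subseteq)$ to the relation $g(U) = \sum_{T \subseteq U}(-1)^{\abs{T}} f(T)$, which inverts to yield
\begin{equation}
    f(V(G)) = \sum_{T \subseteq V(G)} (-1)^{\abs{T}} g(T). \notag
\end{equation}

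Next I would establish that $g$ factorises across the connected components of the strong dependency graph. The key input is that $f$ itself factorises under disjointness: if $U = U_1 \sqcup U_2$ with $G[U] = G[U_1] \cup G[U_2]$, then the supports of $\{\Pi_v\}_{v \in U_1}$ and $\{\Pi_v\}_{v \in U_2}$ are disjoint by the strong dependency condition, so $\bigcap_{v \in U}\ker\Pi_v$ splits as a tensor product and $f(U) = f(U_1) f(U_2)$ under the stated normalisation convention. Summing this identity weighted by alternating signs over subsets of each part yields $g(U_1 \sqcup U_2) = g(U_1) g(U_2)$, and iterating gives $g(T) = \prod_{\gamma \in \Gamma_T} g(V(\gamma))$, where $\Gamma_T$ denotes the maximally connected components of $G[T]$.

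Combining these ingredients and using $\abs{T} = \sum_{\gamma \in \Gamma_T}\abs{\gamma}$, I obtain
\begin{equation}
    f(V(G)) = \sum_{T \subseteq V(G)} \prod_{\gamma \in \Gamma_T} (-1)^{\abs{\gamma}} g(V(\gamma)) = \sum_{T \subseteq V(G)} \prod_{\gamma \in \Gamma_T} w_\gamma. \notag
\end{equation}
The map $T \mapsto \Gamma_T$ is a bijection between subsets of $V(G)$ and admissible polymer sets $\Gamma \in \mathcal{G}$, so the right-hand side is precisely the polymer partition function, completing the proof. The main obstacle is verifying the tensor-product factorisation of $f$ consistently with the dimension normalisation; once this is in hand, the remainder reduces to Möbius inversion and a standard reorganisation of the subset sum by connected components.
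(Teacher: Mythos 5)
Your proposal is correct and follows essentially the same route as the paper: the double inclusion-exclusion identity you derive via M\"obius inversion is exactly the paper's starting identity, and the reorganisation by connected components is the same factorisation step. You are somewhat more explicit than the paper in justifying why $g$ multiplies over components (via the tensor-product factorisation of $f$ for projectors with disjoint supports), which is a worthwhile detail but not a different argument.
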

\begin{proof}
    By the principle of inclusion-exclusion (see for example \cite[Theorem 12.1]{graham1995handbook}),
    \begin{equation}
        \dim\left[\bigcap_{v \in V(G)}\ker\Pi_v\right] = \sum_{S \subseteq V(G)}(-1)^\abs{S}\sum_{T \subseteq S}(-1)^\abs{T}\dim\left[\bigcap_{v \in T}\ker\Pi_v\right]. \notag
    \end{equation}
    For a subset $S \subseteq V(G)$, let $\Gamma_S$ denote the maximally connected components of the induced subgraph $G[S]$. By factorising over these components, we have
    \begin{align}
        \dim\left[\bigcap_{v \in V(G)}\ker\Pi_v\right] &= \sum_{S \subseteq V(G)}\prod_{\gamma\in\Gamma_S}(-1)^\abs{\gamma}\sum_{T \subseteq V(\gamma)}(-1)^\abs{T}\dim\left[\bigcap_{v \in T}\ker\Pi_v\right] \notag \\
        &= \sum_{\Gamma\in\mathcal{G}}\prod_{\gamma\in\Gamma}w_\gamma. \notag
    \end{align}
    This completes the proof.
\end{proof}

Combining Theorem~\ref{theorem:ApproximationAlgorithmAbstractPolymerModelPartitionFunction} with Lemma~\ref{lemma:DimensionIntersectionGeneralAbstractPolymerModel} proves Theorem~\ref{theorem:ApproximationAlgorithmDimensionIntersectionGeneral}. We now establish a result which does not require a global inclusion-exclusion stability condition under the assumption of a spectral gap; however, it yields only an affine approximation to the dimension of intersection.

\begin{theorem}
    \label{theorem:DetectabilityApproximationAlgorithmDimensionIntersectionGeneral}
    Fix $\delta,\epsilon>0$, $T\in\mathbb{Z}^+$, and $d,k,\Delta\in\mathbb{Z}_{\geq2}$. Let $\mathcal{H}$ be a Hilbert space that is a tensor product of $d$-dimensional local spaces. Further let $\{\Pi_v\}_{v \in V(G)}$ be a set of $k$-local projectors on $\mathcal{H}$ with strong dependency graph $G$ of maximum degree $\Delta$ and chromatic number $\chi$. Finally let $\lambda_\star\geq0$ denote the spectral gap of $\sum_{v \in V(G)}\Pi_v$. Suppose that, for all $v \in V(G)$, 
    \begin{equation}
        \rank[\Pi_v] \leq \left(\frac{1}{e^{1+\delta}(2T(\Delta+1)-1)}\right)^{T\chi}. \notag 
    \end{equation}
    Then there is a polynomial-time algorithm that outputs a number $z$ such that
    \begin{equation}
        \abs{z-\dim\left[\bigcap_{v \in V(G)}\ker\Pi_v\right]} \leq \notag \epsilon\dim\left[\bigcap_{v \in V(G)}\ker\Pi_v\right] + (1+\epsilon)\left(\frac{1}{1+\frac{\lambda_\star}{\chi^2
        }}\right)^\frac{T}{2}.
    \end{equation}
\end{theorem}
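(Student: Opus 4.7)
My plan is to construct an operator whose normalised trace approximates $\dim\!\left[\bigcap_{v \in V(G)}\ker\Pi_v\right]$ up to a detectability-lemma error, and then apply Theorem~\ref{theorem:ApproximationAlgorithmAbstractPolymerModelPartitionFunction} to an abstract polymer representation of that trace. Fix a proper $\chi$-colouring $\{C_i\}_{i \in [\chi]}$ of $G$ and introduce the \emph{detectability operator} $D \coloneqq \prod_{i=1}^\chi \tilde{\Pi}_{C_i}$, where $\tilde{\Pi}_{C_i} \coloneqq \prod_{v \in C_i}(\mathbb{I}-\Pi_v)$. By the strong dependency hypothesis, projectors within a colour class have disjoint supports and hence commute, so each $\tilde{\Pi}_{C_i}$ is itself a projector that fixes the projector $P$ onto $\bigcap_v \ker\Pi_v$. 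The detectability lemma then gives the one-step contraction $\norm{D(\mathbb{I}-P)}_\infty^2 \leq 1/(1+\lambda_\star/\chi^2)$; since $D$ preserves the orthogonal complement of $P$ (because $PD=P$), iterating yields $\norm{D^T(\mathbb{I}-P)}_\infty \leq (1+\lambda_\star/\chi^2)^{-T/2}$. Combined with $D^T P = P$ and the decomposition $\tr[D^T] = \tr[D^T P] + \tr[D^T(\mathbb{I}-P)]$ together with $\abs{\tr[D^T(\mathbb{I}-P)]} \leq \dim\mathcal{H}\cdot\norm{D^T(\mathbb{I}-P)}_\infty$, this produces
\begin{equation}
    \abs{\tr[D^T]/\dim\mathcal{H} - \dim\!\left[\bigcap_{v \in V(G)}\ker\Pi_v\right]} \leq (1+\lambda_\star/\chi^2)^{-T/2}. \notag
\end{equation}

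Next, I express $\tr[D^T]/\dim\mathcal{H}$ as the partition function of an abstract polymer model on the strong product graph $G' \coloneqq G \boxtimes K_T$, whose vertex set $V(G) \times [T]$ indexes the projector factors in $D^T$. Expanding each $(\mathbb{I}-\Pi_v)$ and grouping monomials by their $G'$-connected components yields a polymer expansion with weights
\begin{equation}
    w_\gamma \coloneqq \frac{(-1)^\abs{\gamma}}{d^\abs{\operatorname{supp}\gamma}}\tr\!\left[\prod_{(v,t)\in V(\gamma)}(\Pi_v)_t\right], \notag
\end{equation}
where the product respects the layered ordering inherited from $D^T$. The strong dependency hypothesis ensures that projectors indexed by vertices in distinct $G'$-components act on disjoint qudits and commute, so the trace factorises across components, as in Lemma~\ref{lemma:DimensionIntersectionGeneralAbstractPolymerModel}.

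To verify the weight bound, I apply H\"older's inequality with respect to a proper $T\chi$-colouring of $G'$ (obtained from the product of colourings of $G$ and $K_T$), noting that $G'$ has maximum degree $\Delta' = T(\Delta+1)-1$. Within each colour class of $G'$, the relevant projectors have pairwise disjoint supports (their $v$-coordinates form an independent set in $G$) and so commute, making their ordered product a projector of normalised rank $\prod\rank[\Pi_v]$. Following Lemma~\ref{lemma:DimensionIntersectionCommutingWeightBound} with $\chi$ replaced by $T\chi$ yields
\begin{equation}
    \abs{w_\gamma} \leq \left(\prod_{(v,t) \in V(\gamma)}\rank[\Pi_v]\right)^{1/(T\chi)} \leq \left(\frac{1}{e^{1+\delta}(2\Delta'+1)}\right)^\abs{\gamma}, \notag
\end{equation}
by the hypothesis. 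Each $w_\gamma$ depends on a trace of $O(\abs{\gamma})$ $k$-local projectors on $O(k\abs{\gamma})$ qudits and can therefore be computed in time $\exp(O(\abs{\gamma}))$. Theorem~\ref{theorem:ApproximationAlgorithmAbstractPolymerModelPartitionFunction} applied on $G'$ then produces an FPTAS for $\tr[D^T]/\dim\mathcal{H}$ with multiplicative error $\epsilon$, in time polynomial in $T\abs{V(G)}$ and $1/\epsilon$, and the triangle inequality combined with the detectability bound yields the claimed estimate.

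The main obstacle will be justifying the polymer factorisation: the ordered product defining $w_\gamma$ is genuinely non-commutative, and its factorisation across $G'$-components relies on the observation that the strong dependency graph forces projectors in distinct components to act on disjoint qudits, so they commute even when interleaved throughout the layered product $D^T$. A secondary technical point is the chromatic bound $\chi(G \boxtimes K_T) \leq T\chi$, which follows from the general inequality $\chi(G \boxtimes H) \leq \chi(G)\chi(H)$.
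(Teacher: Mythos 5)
Your proposal is correct and follows essentially the same route as the paper: the truncated detectability operator $D^T$, the polymer model on $G \boxtimes K_T$ with factorisation over components via disjoint supports, the H\"older bound over the $T\chi$ product colour classes (giving $2\Delta'+1 = 2T(\Delta+1)-1$), an application of Theorem~\ref{theorem:ApproximationAlgorithmAbstractPolymerModelPartitionFunction}, and the detectability-lemma contraction combined with the triangle inequality. The only cosmetic difference is that you bound the normalised trace error by $\norm{D^T(\mathbb{I}-P)}_\infty$ directly, whereas the paper retains an intermediate $\operatorname{codim}$ factor before discarding it.
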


We prove Theorem~\ref{theorem:DetectabilityApproximationAlgorithmDimensionIntersectionGeneral} by first considering the \emph{dimension of detectability}, which provides an approximation to the dimension of intersection via a finite sequence of projections. We then show that the conditions required to apply Theorem~\ref{theorem:ApproximationAlgorithmAbstractPolymerModelPartitionFunction} are satisfied. That is, we show that (1) the dimension of detectability admits a suitable abstract polymer model representation, and (2) the polymer weights satisfy the desired bound. Finally, we use the detectability lemma~\cite{aharonov2009detectability, anshu2016simple} to bound the difference between the dimension of detectability and the dimension of intersection.

Let $T$ be a positive integer and $\{C_i\}_{i\in[\chi]}$ a proper colouring of the strong dependency graph $G$. We are interested in the dimension of detectability $\tr\left[\left(\prod_{i=1}^\chi\prod_{v \in V(G) \cap C_i}(\mathbb{I}-\Pi_v)\right)^T\right]$. For the dimension of detectability, we consider polymers that are connected induced subgraphs of $G \boxtimes K_T$. Further, all products of operators are taken with respect to a fixed ordering of the vertices.

\begin{lemma}
    \label{lemma:DimensionDetectabilityAbstractPolymerModel}
    The dimension of detectability admits the following abstract polymer model representation.
    \begin{equation}
        \tr\left[\left(\prod_{i=1}^\chi\prod_{v \in V(G) \cap C_i}(\mathbb{I}-\Pi_v)\right)^T\right] = \sum_{\Gamma\in\mathcal{G}}\prod_{\gamma\in\Gamma}w_\gamma, \notag
    \end{equation}
    where
    \begin{equation}
        w_\gamma \coloneqq (-1)^\abs{\gamma}\tr\left[\prod_{v \in V(\gamma)}\Pi_v\right]. \notag
    \end{equation}
\end{lemma}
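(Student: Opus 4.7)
The plan is to proceed analogously to the proofs of Lemma~\ref{lemma:ProbabilityIntersectionAbstractPolymerModel} and Lemma~\ref{lemma:DimensionIntersectionCommutingAbstractPolymerModel}, but now working over the extended vertex set $V(G) \times [T]$ corresponding to the $T$ layers of the detectability operator. First I would expand each factor $(\mathbb{I} - \Pi_v)$ in each of the $T$ layers as $\mathbb{I} + (-1)\Pi_v$, producing a sum indexed by subsets $S \subseteq V(G) \times [T]$. Each element $(v,s) \in S$ indicates that in layer $s$ the factor $-\Pi_v$ was selected, and each term of the expansion carries a sign $(-1)^{\abs{S}}$ together with the trace of an ordered product of projectors inherited from the colour-class product within each layer and the left-to-right ordering of the layers.

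Next I would identify the connected components of the subgraph of $G \boxtimes K_T$ induced on $S$ as the polymers, and show that the trace factorises over these components. The key structural observation is that if $(u,s)$ and $(u',s')$ lie in distinct components of $(G \boxtimes K_T)[S]$, then they are non-adjacent in $G \boxtimes K_T$, which by definition of the strong product forces both $u \neq u'$ and $\{u,u'\} \notin E(G)$. By the definition of the strong dependency graph of the projectors, $\Pi_u$ and $\Pi_{u'}$ are then supported on disjoint sets of local spaces, so they commute and their operators act on disjoint tensor factors. Consequently the ordered product can be rearranged to group together all operators attached to each component without altering the relative order inside any component, and the normalised trace factorises across components.

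Finally, I would collect the resulting double sum into a sum over admissible families $\Gamma \in \mathcal{G}$ of connected induced subgraphs of $G \boxtimes K_T$, by noting that a subset $S$ is uniquely determined by its decomposition into connected components and that any such family is admissible under the compatibility relation (disconnectedness in $G \boxtimes K_T$). With $w_\gamma$ defined as $(-1)^{\abs{\gamma}}$ times the normalised trace of the canonically ordered product of the projectors indexed by $V(\gamma)$, the identity follows. The main obstacle is verifying the trace factorisation in the presence of a fixed operator ordering, since within a single polymer the projectors generically do not commute: one must confirm that disjointness of supports across distinct components is strong enough to slide operators across component boundaries without disturbing the intra-component ordering that defines $w_\gamma$, and that this canonical intra-component ordering is independent of the ambient configuration $S$, so that $w_\gamma$ depends only on $\gamma$ itself.
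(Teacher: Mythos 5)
Your proposal is correct and follows essentially the same route as the paper: expand $\prod_{v\in V(G\boxtimes K_T)}(\mathbb{I}-\Pi_v)$ into a sum over subsets $S\subseteq V(G\boxtimes K_T)$ under a fixed (lexicographic) ordering, and factorise the trace over the connected components of $(G\boxtimes K_T)[S]$. The factorisation and ordering subtleties you flag are resolved exactly as you indicate — non-adjacency in $G\boxtimes K_T$ forces disjoint supports via the strong dependency graph, and the fixed global ordering restricts consistently to each polymer — so there is no gap.
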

\begin{proof}
    We impose a lexicographic ordering on the vertices of $G \boxtimes K_T$ by identifying each vertex with a triple $(\tau,i,v)$ with $\tau \in [T]$, $i \in [\chi]$, and $v \in C_i$. We shall take all products of operators with respect to this ordering. By expanding the trace, we obtain
    \begin{align}
        \tr\left[\left(\prod_{i=1}^\chi\prod_{v \in V(G) \cap C_i}(\mathbb{I}-\Pi_v)\right)^T\right] &= \tr\left[\prod_{v \in V(G \boxtimes K_T)}(\mathbb{I}-\Pi_v)\right] \notag \\
        &= \sum_{S \subseteq V(G \boxtimes K_T)}(-1)^\abs{S}\tr\left[\prod_{v \in S}\Pi_v\right]. \notag
    \end{align}
    For a subset $S \subseteq V(G)$, let $\Gamma_S$ denote the maximally connected components of the induced subgraph $(G \boxtimes K_T)[S]$. By factorising over these components, we have
    \begin{align}
        \tr\left[\left(\prod_{i=1}^\chi\prod_{v \in V(G) \cap C_i}(\mathbb{I}-\Pi_v)\right)^T\right] &= \sum_{S \subseteq V(G \boxtimes K_T)}\prod_{\gamma\in\Gamma_S}(-1)^\abs{\gamma}\tr\left[\prod_{v \in V(\gamma)}\Pi_v\right] \notag \\
        &= \sum_{\Gamma\in\mathcal{G}}\prod_{\gamma\in\Gamma}w_\gamma. \notag
    \end{align}
    This completes the proof.
\end{proof}

\begin{lemma}
    \label{lemma:DimensionDetectabilityWeightBound}
    Fix $\delta>0$, $T\in\mathbb{Z}^+$, and $\Delta\in\mathbb{Z}_{\geq2}$. Let $\mathcal{H}$ be a Hilbert space that is a tensor product of local spaces. Further let $\{\Pi_v\}_{v \in V(G)}$ be a set of projectors on $\mathcal{H}$ with strong dependency graph $G$ of maximum degree $\Delta$ and chromatic number $\chi$.
    Suppose that, for all $v \in V(G)$, 
    \begin{equation}
        \rank[\Pi_v] \leq \left(\frac{1}{e^{1+\delta}(2T(\Delta+1)-1)}\right)^{T\chi}. \notag 
    \end{equation}
    Then, for all polymers $\gamma\in\mathcal{C}$, the weight $w_\gamma$ satisfies
    \begin{equation}
        \abs{w_\gamma} \leq \left(\frac{1}{e^{1+\delta}(2T(\Delta+1)-1)}\right)^\abs{\gamma}. \notag 
    \end{equation}
\end{lemma}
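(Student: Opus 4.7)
The plan is to mirror the strategy of Lemma~\ref{lemma:DimensionIntersectionCommutingWeightBound}, replacing the colouring of $G$ by an appropriate colouring of the polymer graph $G \boxtimes K_T$ and then applying H\"older's inequality for the trace class-wise. First, I would observe that a proper $\chi$-colouring $c$ of $G$ lifts to a proper $T\chi$-colouring of $G \boxtimes K_T$ via $(v,\tau) \mapsto (c(v),\tau)$, with colour classes indexed by pairs $(i,\tau) \in [\chi] \times [T]$. Within a single class, the polymer vertices have the form $(v,\tau)$ with $v \in C_i$; any two such $v, v'$ are non-adjacent in $G$, so by the strong dependency assumption the corresponding projectors $\Pi_v$ and $\Pi_{v'}$ have disjoint supports and hence commute. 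The combinatorial content of the hypothesis matches this colouring: the polymer graph has maximum degree $T(\Delta+1)-1$, and $2T(\Delta+1)-1 = 2\Delta(G\boxtimes K_T)+1$, which is exactly the form needed by Theorem~\ref{theorem:ApproximationAlgorithmAbstractPolymerModelPartitionFunction}.

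Next, I would exploit the lexicographic ordering introduced in the proof of Lemma~\ref{lemma:DimensionDetectabilityAbstractPolymerModel}, which is chosen precisely so that polymer vertices sharing a colour class appear consecutively in the product $\prod_{u \in V(\gamma)} \Pi_{u|_G}$, where $u|_G$ denotes the base-graph coordinate. I would group these consecutive factors into sub-products $P_{\tau,i} \coloneqq \prod_{v:(v,\tau)\in V(\gamma),\,v\in C_i} \Pi_v$, where an empty product is interpreted as $\mathbb{I}$. Each $P_{\tau,i}$ is itself a projector, since its factors commute and have pairwise disjoint supports, with normalised rank equal to $\prod_v \rank[\Pi_v]$ over the base vertices it covers. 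Applying H\"older's inequality for the trace with $T\chi$ equal exponents then gives $\abs{w_\gamma} \leq d^{-\abs{\operatorname{supp}(\gamma)}} \prod_{\tau,i} \norm{P_{\tau,i}}_{T\chi}$, where each Schatten norm equals the $T\chi$-th root of the unnormalised rank of $P_{\tau,i}$ viewed as an operator on the full support of $\gamma$.

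A routine bookkeeping calculation---tracking how the $d^{\abs{\operatorname{supp}(\gamma)}}$ factors contributed by the $T\chi$ sub-products combine---reduces the bound to $\prod_{u \in V(\gamma)} \rank[\Pi_{u|_G}]^{1/(T\chi)}$, and substituting the hypothesised bound on $\rank[\Pi_v]$ yields the claimed weight estimate. The main obstacle is this normalisation bookkeeping: because the paper normalises rank, trace, and dimension by the dimension of the relevant support, one must view each $P_{\tau,i}$ as acting on the full support of $\gamma$ when applying the $T\chi$-Schatten norm, and check that empty colour classes (contributing $\mathbb{I}$ with unnormalised rank $d^{\abs{\operatorname{supp}(\gamma)}}$) do not produce spurious factors. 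Once these are accounted for, the $d$-powers cancel exactly across the $T\chi$ H\"older factors, leaving only the product of individual rank contributions, which gives the stated inequality.
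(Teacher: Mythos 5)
Your proposal is correct and follows essentially the same route as the paper: express $\abs{w_\gamma}$ as a normalised trace norm, lift the $\chi$-colouring of $G$ to a $T\chi$-colouring of $G \boxtimes K_T$, apply H\"older's inequality with $T\chi$ equal exponents across the colour classes (each class-product being a projector built from disjointly supported factors), and cancel the $d$-powers to land on $\bigl(\prod_{v}\rank[\Pi_v]\bigr)^{1/(T\chi)}$. Your explicit attention to the lexicographic ordering and to the normalisation of empty colour classes fills in details the paper leaves implicit, but the argument is the same.
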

\begin{proof}
    For any polymer $\gamma\in\mathcal{C}$, we have
    \begin{equation}
        \abs{w_\gamma} = \frac{1}{d^\abs{\operatorname{supp}(\gamma)}}\norm{\prod_{v \in V(\gamma)}\Pi_v}_1, \notag
    \end{equation}
    where $\operatorname{supp}(\gamma)\coloneqq\bigcup_{v \in V(\gamma)}\operatorname{supp}(\Pi_v)$ and $\norm{\;\cdot\;}_p$ denotes the Schatten $p$-norm. Let $\{C_i\}_{i\in[\chi]}$ be a proper colouring of $G$. We define a proper colouring of $G \boxtimes K_T$ by $\{C_{\tau,i}\}_{\tau\in[T],i\in[\chi]}$, where $C_{\tau,i}\coloneqq\{(\tau,i,v) \mid v \in C_i\}$. Note that this induces a proper colouring of $\gamma$. By applying H\"older's inequality with respect to the colouring classes, we obtain
    \begin{equation}
        \abs{w_\gamma} \leq \frac{1}{d^\abs{\operatorname{supp}(\gamma)}}\prod_{\tau=1}^{T}\prod_{i=1}^{\chi}\norm{\prod_{v \in V(\gamma) \cap C_{\tau,i}}\Pi_v}_{T\chi} \leq \left(\prod_{v \in V(\gamma)}\rank[\Pi_v]\right)^\frac{1}{T\chi} \leq \left(\frac{1}{e^{1+\delta}(2T(\Delta+1)-1)}\right)^\abs{\gamma}, \notag
    \end{equation}
    completing the proof.
\end{proof}

\begin{lemma}
    \label{lemma:DimensionDetectabilityIntersectionApproximationBound}
    Fix $\Delta\in\mathbb{Z}_{\geq2}$. Let $\mathcal{H}$ be a Hilbert space that is a tensor product of local spaces. Further let $\{\Pi_v\}_{v \in V(G)}$ be a set of projectors on $\mathcal{H}$ with strong dependency graph $G$ of maximum degree $\Delta$ and chromatic number $\chi$. Finally let $\lambda_\star\geq0$ denote the spectral gap of $\sum_{v \in V(G)}\Pi_v$. Suppose that, for all $v \in V(G)$,
    \begin{equation}
        \rank[\Pi_v] \leq \frac{1}{e(\Delta+1)}. \notag
    \end{equation}
    Then, for any $T\in\mathbb{Z}^+$ and any proper colouring $\{C_i\}_{i\in[\chi]}$ of $G$,
    \begin{equation}
        \abs{\tr\left[\left(\prod_{i=1}^\chi\prod_{v \in V(G) \cap C_i}(\mathbb{I}-\Pi_v)\right)^T\right]-\dim\left[\bigcap_{v \in V(G)}\ker\Pi_v\right]} \leq \left(\frac{1}{1+\frac{\lambda_\star}{\chi^2
        }}\right)^\frac{T}{2}\operatorname{codim}\left[\bigcap_{v \in V(G)}\ker\Pi_v\right]. \notag
    \end{equation}
\end{lemma}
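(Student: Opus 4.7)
The plan is to introduce the projector $P_0$ onto $\bigcap_{v \in V(G)}\ker\Pi_v$ (the zero-eigenspace of the frustration-free Hamiltonian $H = \sum_{v \in V(G)}\Pi_v$) and to decompose the detectability operator $D \coloneqq \prod_{i=1}^\chi\prod_{v \in V(G) \cap C_i}(\mathbb{I}-\Pi_v)$ along the orthogonal splitting induced by $P_0$. First, since $\Pi_v P_0 = P_0\Pi_v = 0$ for every $v$, each factor of $D$ satisfies $(\mathbb{I}-\Pi_v)P_0 = P_0(\mathbb{I}-\Pi_v) = P_0$, so $DP_0 = P_0 D = P_0$ and hence $D^T P_0 = P_0 D^T = P_0$. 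This gives
\begin{equation}
    \tr[D^T] = \tr[P_0] + \tr\bigl[(\mathbb{I}-P_0)D^T(\mathbb{I}-P_0)\bigr] = \dim\Bigl[\bigcap_{v \in V(G)}\ker\Pi_v\Bigr] + \tr\bigl[(\mathbb{I}-P_0)D^T(\mathbb{I}-P_0)\bigr], \notag
\end{equation}
so it remains to bound the second term.

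Next, I would apply the detectability lemma~\cite{aharonov2009detectability, anshu2016simple}. Because any two projectors in a common colour class $C_i$ have disjoint supports (by the definition of strong dependency graph), they commute pairwise, so $\{C_i\}_{i\in[\chi]}$ partitions $\{\Pi_v\}_{v \in V(G)}$ into $\chi$ layers of commuting projectors. This identifies the commutativity degree of the product defining $D$ as $\chi$, and the detectability lemma then yields
\begin{equation}
    \bigl\|D(\mathbb{I}-P_0)\bigr\|_\infty^2 \leq \frac{1}{1+\lambda_\star/\chi^2}. \notag
\end{equation}
Since $D P_0 = P_0 D = P_0$, the operator $D$ leaves $\operatorname{im}(\mathbb{I}-P_0)$ invariant, so submultiplicativity of the operator norm on this invariant subspace gives $\bigl\|D^T(\mathbb{I}-P_0)\bigr\|_\infty \leq (1+\lambda_\star/\chi^2)^{-T/2}$.

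Finally, I would combine these ingredients. The operator $(\mathbb{I}-P_0)D^T(\mathbb{I}-P_0)$ is supported on $\operatorname{im}(\mathbb{I}-P_0)$, a subspace of normalised dimension $\operatorname{codim}\bigl[\bigcap_{v \in V(G)}\ker\Pi_v\bigr]$. For the normalised trace, the elementary inequality $|\tr[A]| \leq \|A\|_\infty$ times the normalised dimension of the support of $A$ produces
\begin{equation}
    \Bigl|\tr\bigl[(\mathbb{I}-P_0)D^T(\mathbb{I}-P_0)\bigr]\Bigr| \leq \left(\frac{1}{1+\lambda_\star/\chi^2}\right)^{T/2}\operatorname{codim}\Bigl[\bigcap_{v \in V(G)}\ker\Pi_v\Bigr], \notag
\end{equation}
which matches the claimed bound. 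The main obstacle is the correct invocation of the detectability lemma with commutativity parameter $g = \chi$: this requires verifying that the colour-class grouping used to define $D$ matches the layered-commuting structure assumed by the lemma, a property guaranteed by the pairwise-disjoint supports within each $C_i$. The rank hypothesis $\rank[\Pi_v] \leq 1/(e(\Delta+1))$ places the system in the quantum Lov\'asz local lemma regime, ensuring the ground space is non-trivial and the spectral-gap analysis of the detectability lemma is meaningful.
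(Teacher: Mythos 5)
Your proposal is correct and follows essentially the same route as the paper's proof: both decompose the trace via the ground-space projector (using $D P_0 = P_0 D = P_0$), bound the off-ground-space contribution by the operator norm of $D^T(\mathbb{I}-P_0)$ times the normalised codimension, and control that norm by submultiplicativity together with the detectability lemma of Anshu, Arad, and Vidick applied to the $\chi$ commuting layers given by the colour classes. The only cosmetic difference is that you split the trace additively up front while the paper writes the difference directly as $\tr[\Pi_\star^T(\mathbb{I}-\Pi_0)]$; these are identical by cyclicity and idempotence of $\mathbb{I}-P_0$.
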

\begin{proof}
    For convenience, we define the operator $\Pi_\star$ by
    \begin{equation}
        \Pi_\star \coloneqq \prod_{i=1}^\chi\prod_{v \in V(G) \cap C_i}(\mathbb{I}-\Pi_v). \notag
    \end{equation}
    We further let $\Pi_0$ denote the projector onto $\bigcap_{v \in V(G)}\ker\Pi_v$. Then, for any $v \in V(G)$, we have $(\mathbb{I}-\Pi_v)\Pi_0=\Pi_0$, and so $\Pi_\star\Pi_0=\Pi_0$. Thus, for any $T\in\mathbb{Z}^+$,
    \begin{equation}
        \dim\left[\bigcap_{v \in V(G)}\ker\Pi_v\right] = \tr\left[\Pi_0\right] = \tr\left[\Pi_\star^T\Pi_0\right]. \notag
    \end{equation}
    Hence,
    \begin{align}
        \abs{\tr\left[\left(\prod_{i=1}^\chi\prod_{v \in V(G) \cap C_i}(\mathbb{I}-\Pi_v)\right)^T\right]-\dim\left[\bigcap_{v \in V(G)}\ker\Pi_v\right]} &= \abs{\tr\left[\Pi_\star^T(\mathbb{I}-\Pi_0)\right]} \notag \\
        &\leq \norm{\Pi_\star^T(\mathbb{I}-\Pi_0)}\operatorname{codim}\left[\bigcap_{v \in V(G)}\ker\Pi_v\right] \notag \\
        &\leq \norm{\Pi_\star(\mathbb{I}-\Pi_0)}^T\operatorname{codim}\left[\bigcap_{v \in V(G)}\ker\Pi_v\right]\!, \notag
    \end{align}
    where $\norm{\;\cdot\;}$ denotes the operator norm. By the quantum Lov\'asz local lemma~\cite[Theorem 3]{ambainis2012quantum}, the assumption on the rank ensures that the dimension of $\bigcap_{v \in V(G)}\ker\Pi_v$ is positive. Then, as a direct corollary of the detectability lemma~\cite[Corollary~1]{anshu2016simple}, we have
    \begin{equation}
        \norm{\Pi_\star(\mathbb{I}-\Pi_0)} \leq \left(\frac{1}{1+\frac{\lambda_\star}{\chi^2
        }}\right)^\frac{1}{2}. \notag
    \end{equation}
    Thus,
    \begin{equation}
        \abs{\tr\left[\left(\prod_{i=1}^\chi\prod_{v \in V(G) \cap C_i}(\mathbb{I}-\Pi_v)\right)^T\right]-\dim\left[\bigcap_{v \in V(G)}\ker\Pi_v\right]} \leq \left(\frac{1}{1+\frac{\lambda_\star}{\chi^2
        }}\right)^\frac{T}{2}\operatorname{codim}\left[\bigcap_{v \in V(G)}\ker\Pi_v\right], \notag
    \end{equation}
    completing the proof.
\end{proof}

We now prove Theorem~\ref{theorem:DetectabilityApproximationAlgorithmDimensionIntersectionGeneral}.

\begin{proof}[Proof of Theorem~\ref*{theorem:DetectabilityApproximationAlgorithmDimensionIntersectionGeneral}]
    By Lemma~\ref{lemma:DimensionDetectabilityAbstractPolymerModel}, the dimension of detectability admits a polymer model representation where the polymers are connected induced subgraphs of $G \boxtimes K_T$ and that two polymers $\gamma$ and $\gamma'$ are compatible if and only if $(G \boxtimes K_T)[V(\gamma) \cup V(\gamma')] = \gamma\cup\gamma'$. Then, by Lemma~\ref{lemma:DimensionDetectabilityWeightBound}, the assumption on the rank ensures that, for all polymers $\gamma\in\mathcal{C}$, the weight $w_\gamma$ satisfies
    \begin{equation}
        \abs{w_\gamma} \leq \left(\frac{1}{e^{1+\delta}(2T(\Delta+1)-1)}\right)^\abs{\gamma}. \notag 
    \end{equation}
    Further, for all polymers $\gamma\in\mathcal{C}$, the weight $w_\gamma$ can be computed in time $\exp(O(\abs{\gamma}))$ by diagonalising the $\exp(O(\abs{\gamma}))$-dimensional support. Observe that the maximum degree of $G \boxtimes K_T$ is $T(\Delta+1)-1$. Now, by Theorem~\ref{theorem:ApproximationAlgorithmAbstractPolymerModelPartitionFunction}, there is a fully polynomial-time approximation scheme for the dimension of detectability. That is, for any $\epsilon>0$, there is a polynomial-time algorithm that outputs a number $z$ such that
    \begin{equation}
        \abs{z-\tr\left[\left(\prod_{i=1}^\chi\prod_{v \in V(G) \cap C_i}(\mathbb{I}-\Pi_v)\right)^T\right]} \leq \epsilon\tr\left[\left(\prod_{i=1}^\chi\prod_{v \in V(G) \cap C_i}(\mathbb{I}-\Pi_v)\right)^T\right]. \notag
    \end{equation}
    The assumption on the rank ensures that, for all $v \in V(G)$,
    \begin{equation}
        \rank[\Pi_v] \leq \frac{1}{e(\Delta+1)}. \notag
    \end{equation}
    Hence, by Lemma~\ref{lemma:DimensionDetectabilityIntersectionApproximationBound}, we have
    \begin{align}
        \abs{\tr\left[\left(\prod_{i=1}^\chi\prod_{v \in V(G) \cap C_i}(\mathbb{I}-\Pi_v)\right)^T\right]-\dim\left[\bigcap_{v \in V(G)}\ker\Pi_v\right]} &\leq \left(\frac{1}{1+\frac{\lambda_\star}{\chi^2
        }}\right)^\frac{T}{2}\operatorname{codim}\left[\bigcap_{v \in V(G)}\ker\Pi_v\right] \notag \\
        &\leq \left(\frac{1}{1+\frac{\lambda_\star}{\chi^2
        }}\right)^\frac{T}{2}. \notag
    \end{align}
    Therefore,
    \begin{align}
        \abs{z-\tr\left[\left(\prod_{i=1}^\chi\prod_{v \in V(G) \cap C_i}(\mathbb{I}-\Pi_v)\right)^T\right]} &\leq \epsilon\left(\dim\left[\bigcap_{v \in V(G)}\ker\Pi_v\right]+\left(\frac{1}{1+\frac{\lambda_\star}{\chi^2
        }}\right)^\frac{T}{2}\right). \notag
    \end{align}
    Then, by the triangle inequality,
    \begin{align}
        \abs{z-\dim\left[\bigcap_{v \in V(G)}\ker\Pi_v\right]} &\leq \epsilon\left(\dim\left[\bigcap_{v \in V(G)}\ker\Pi_v\right]+\left(\frac{1}{1+\frac{\lambda_\star}{\chi^2
        }}\right)^\frac{T}{2}\right)+\left(\frac{1}{1+\frac{\lambda_\star}{\chi^2
        }}\right)^\frac{T}{2} \notag \\
        &= \epsilon\dim\left[\bigcap_{v \in V(G)}\ker\Pi_v\right]+(1+\epsilon)\left(\frac{1}{1+\frac{\lambda_\star}{\chi^2
        }}\right)^\frac{T}{2}. \notag
    \end{align}
    This completes the proof.
\end{proof}

\section{Conclusion \& Outlook}
\label{section:ConclusionAndOutlook}

We have established efficient approximate counting algorithms for several natural problems in local lemma regimes. We obtained fully polynomial-time approximation schemes for both the probability of intersection and the dimension of intersection for commuting projectors. For general projectors, we provided two algorithms: a fully polynomial-time approximation scheme under a global inclusion-exclusion stability condition, and an efficient affine approximation under a spectral gap assumption. As corollaries of our results, we obtained efficient algorithms for approximating the number of satisfying assignments of CNF formulae and the dimension of satisfying subspaces of quantum satisfiability formulae in local lemma regimes.

Several important problems remain open. A natural question is whether the global inclusion-exclusion stability condition can be relaxed to obtain a fully polynomial-time approximation scheme for the dimension of intersection of general projectors under local conditions alone. Another important problem is the development of efficient approximate sampling algorithms in local lemma regimes, particularly for the quantum case, where no efficient algorithms are known. Finally, understanding the optimal dependence on parameters such as the maximum degree and chromatic number in our approximation schemes remains an interesting direction for future research.

\section*{Acknowledgements}

We thank Tyler Helmuth, Richard Jozsa, Youming Qiao, Guus Regts, and Sergii Strelchuk for helpful discussions. RLM was supported by the ARC Centre of Excellence for Quantum Computation and Communication Technology (CQC2T), project number CE170100012. GW was supported by a scholarship from the Sydney Quantum Academy, and the ARC Centre of Excellence for Quantum Computation and Communication Technology (CQC2T), project number CE170100012.

\bibliography{bibliography}

\end{document}